\numberwithin{figure}{section}
\newtheorem{theorem}{Theorem}
\newtheorem{lemma}{Lemma}
\theoremstyle{definition}
\newtheorem{example}{Example}
\newtheorem{property}{Property}
\newtheorem{remark}{Remark}
\newtheorem{definition}{Definition}
\definecolor{lime}{HTML}{A6CE39}
\DeclareRobustCommand{\orcidicon}{%
	\begin{tikzpicture}
		\draw[lime, fill=lime] (0,0) 
		circle [radius=0.16] 
		node[white] {{\fontfamily{qag}\selectfont \tiny ID}};
		\draw[white, fill=white] (-0.0625,0.095) 
		circle [radius=0.007];
	\end{tikzpicture}
	\hspace{-2mm}
}
\xdef\csname orcid\x\endcsname{\noexpand\href{https://orcid.org/\csname orcidauthor\x\endcsname}{\noexpand\orcidicon}}
\def\BibTeX{{\rm B\kern-.05em{\sc i\kern-.025em b}\kern-.08em
		T\kern-.1667em\lower.7ex\hbox{E}\kern-.125emX}}
\begin{document}
	\title{Decoding Algorithms for Two-dimensional Constacyclic Codes over $\mathbb{F}_q$}
	
	\author{Vidya~Sagar$^{\ast}$\orcidA{},~Shikha~ Patel\orcidB{}~and~ Shayan Srinivasa Garani\orcidC{}

		\thanks{Vidya Sagar, Shikha Patel and Shayan Srinivasa Garani are with the Department of Electronic Systems Engineering, Indian Institute of Science, Bengaluru, Karnataka 560012, India (e-mail: vsagariitd@gmail.com; shikha\_1821ma05@iitp.ac.in; shayangs@iisc.ac.in).$^{\ast}$Corresponding author}
	}
	
	\maketitle
	
	\begin{abstract}
		We derive the spectral domain properties of two-dimensional (2-D) $(\lambda_1, \lambda_2)$-constacyclic codes over $\mathbb{F}_q$ using the 2-D finite field Fourier transform (FFFT). Based on the spectral nulls of 2-D $(\lambda_1, \lambda_2)$-constacyclic codes, we characterize the structure of 2-D constacyclic coded arrays. The proposed 2-D construction has flexible code rates and works for any code areas, be it odd or even area. We present an algorithm to detect the location of 2-D errors. Further, we also propose decoding algorithms for extracting the error values using both time and frequency domain properties by exploiting the sparsity that arises due to duality in the time and frequency domains. Through several illustrative examples, we demonstrate the working of the proposed decoding algorithms. 
		
	\end{abstract}
	\begin{IEEEkeywords}
		Finite field, two-dimensional constacyclic codes, two-dimensional finite field Fourier transform, decoding. 
	\end{IEEEkeywords}
	\section{Introduction}
	Cyclic codes are one of the most important classes of linear codes due to their rich algebraic structure. These codes represent a significant category of codes for two primary reasons. First, cyclic codes can be effectively encoded with shift registers, which is why they are favored for their application in engineering. Second, from a theoretical point of view, cyclic codes are readily identified as the ideals of the quotient ring $\mathbb{F}_q[x]/\langle x^n-1\rangle$.
	It was first studied in 1957 by Prange \cite{prange1957cyclic}. Later, it was generalized as constacyclic codes by Berlekamp \cite{berlekamp1968algebraic}. These codes can correct both random errors and burst errors. Cyclic codes based on Bose-Chaudhuri-Hocquenghem (BCH) and Reed-Solomon (RS) designs have made a tremendous impact in data storage systems, such as hard disk drives and flash memories. Hence, it is natural to consider two-dimensional (2-D) generalizations of 1-D cyclic codes for investigation with applications to multidimensional data storage devices, bar code readers, etc.\par 
	
	In a seminal work done fifty years ago, Imai \cite{imaiTDC} introduced a theory for binary 2-D cyclic (TDC) codes of odd area. These binary TDC codes were derived using the idea of a common zero (CZ) set, i.e., zeros common to all codewords. Further, parity-check tensors, parity-check positions, and dual of TDC codes were characterized using the CZ set. TDC codes can be practically implemented using simple 2-D linear feedback shift registers and combinatorial logics. As a result, the examination of these codes over finite fields has received considerable attention from other researchers in the recent past. In \cite{li20142}, the authors explored two-dimensional skew cyclic codes over a finite field. Two-dimensional cyclic codes corresponding to the ideals of the ring $\mathbb{F}[x,y]/\langle x^l-1,y^{2^k}-1\rangle$ were explored by the authors in \cite{sepasdar2016}. Later, repeated root two-dimensional constacyclic codes of different lengths were explored by the authors in \cite{rajabi2018repeated,patel2020repeated}. The structural behavior of 2-D skew cyclic codes over the ring $\mathbb{F}_q+u\mathbb{F}_q$ with $u^2=1$ was explored by Sharma et al. \cite{sharma2019class}. Recently, Patel and Prakash \cite{Patel_2Dcode} have extended this work to $(\lambda_1,\lambda_2)$-constacyclic codes. 
	
	In almost all of the aforementioned works, the authors have used the polynomial decomposition method and not the common zero sets approach as in Imai's work \cite{imaiTDC}. Characterization of CZ sets is useful in the encoding and decoding of these codes. Taking into account these practical code design considerations such as code rates, flexible code areas, etc., we generalized Imai's theory on TDC codes to 2-D $(\lambda_1,\lambda_2)$-constacyclic codes using the CZ set approach with the motivation that these codes can get a better minimum distance compared to 2-D cyclic codes with the same code areas and code rates, thus improving the error correction ability in \cite{CCDS}. An encoding scheme for 2-D
	constacyclic codes over $\mathbb{F}_q$ was also provided in that work \cite{CCDS}.\par
	
	Frequency domain (image under the 2-D finite field Fourier transform) coding is also called transform domain coding or spectral domain coding. Transform domain coding techniques for finite fields were originally introduced by Blahut \cite{blahut1979algebraic, blahut1983theory}, who first examined cyclic BCH codes and their decoding using syndrome computations \cite{blahut1979transform}. Transform domain encoders and decoders are simple to implement. Further, the syndromes can be readily identified by non-zero spectral values in the parity positions. Inspired by the transform domain approach, several researchers have explored coding techniques in this area; the readers can refer to the following papers \cite{rajan1992transform, rajan1994generalized, rajan1994transform, massey1998discrete, dey2003dft} and references therein.\par 
	
	Decoding of 2-D BCH codes was originally introduced by Madhusudana and Siddiqui by extending Blahut's 1-D approach to the 2-D case using row and column decoders. This approach was made more efficient by exploiting the conjugacy properties of finite fields along with parallel architectures for the optimization of the encoding and decoding algorithms towards low-latency and high-throughput hardware architectures in \cite{mondal2021efficient}. Further, transform domain coding for handling specific 2-D error patterns was explored by Yoon and Moon by extending Imai's approach using the disjoint syndrome criterion for handling error patterns. The reader must note that, in general, it is difficult to endow a 2-D code with all the properties, such as correction of random errors, cluster errors, and specific error patterns. For example, Abdel-Ghaffar, McEliece and Van Tilborg \cite{khaled} have specifically investigated the construction of 2-D codes for rectangular cluster error identification. The interested author can refer to \cite{khaled} and references therein.\par  
	
	Our approach in this article is not in the line of specific codes for burst identification, but rather focused on 2-D random errors, since specific burst patterns can be treated as random errors by designing codes over higher dimensions in appropriate field extensions, following the constructions in \cite{CCDS}. In this article, we focus on characterizing the 2-D code properties of constacyclic codes and constructing decoding algorithms to handle random errors. The central ideas behind our approach are as follows:\\ 
	(a) Based on the common zero (CZ) set, we characterize the code properties of 2-D constacyclic codes. Further, we derive several useful spectral domain properties of 2-D contacyclic coded arrays, useful during decoding.\\
	(b) We identify the error locations by computing the syndromes over the row and column locations, and identify the corresponding bi-index of the error locations.\\ 
	(c) We propose decoding algorithms for both low rate and high rate 2-D constacyclic codes by exploiting the sparsity due to duality in the time and frequency domains based on the CZ sets and spectral nulls, respectively. \par 
	
	Here are the major contributions of our work:
	\begin{enumerate}
		\item We discuss properties of 2-D FFFT, 2-D constacyclic codes in frequency domain, and spectral nulls of 2-D constacyclic codes over $\mathbb{F}_q$.
		\item We present an algorithm for error-detection in 2-D constacyclic codes over $\mathbb{F}_q$. Further, we propose a method for finding the locations of errors.
		\item We propose two efficient decoding methods for correcting random errors in 2-D constacyclic codes over $\mathbb{F}_q$, exploiting the sparsity resulting from the duality of finite field Fourier transform.
	\end{enumerate}
	
	The rest of the paper is organized as follows. In Section \ref{section2}, we briefly review the mathematical background on constacyclic codes  required for the subsequent sections. In Section \ref{section3}, we discuss finite field Fourier transforms and their properties for 2-D constacyclic codes. In Section \ref{section4}, we fully characterize the properties of 2-D constacyclic coded arrays. We introduce the notion of common zero sets for 2-D constacyclic codes over $\mathbb{F}_q$ leading to generator and parity check tensors along with illustrative examples. In Section \ref{section5}, we discuss a technique for error detection in 2-D constacyclic codes over $\mathbb{F}_q$ followed by a method to find the locations of errors. In addition, we propose efficient decoding methods for correcting random errors in 2-D constacyclic codes over $\mathbb{F}_q$. Algorithms for these methods are also presented in this section. Several illustrative examples are worked out to show how the proposed algorithm corrects random within the design distance. Section \ref{section6} concludes the paper.

	\section{Preliminaries} \label{section2}
	In this section, we discuss some basic definitions and results that are useful for the rest of the paper. Let $\mathbb{F}_{q}$ denote the finite field of order $q$, where $q$ is some power of a prime $p$, and let $\mathbb{F}_q^{\ast}=\mathbb{F}_q\setminus \{0\}$.
	\subsection{Notations and background}
	\begin{definition}
		An $[n, k]$-linear code $\mathcal{C}$ over $\mathbb{F}_q$ is called cyclic code if $(c_{n-1}, c_0, \dots , c_{n-2}) \in \mathcal{C}$ whenever \\
		$(c_0, c_1, \dots , c_{n-1}) \in \mathcal{C}$.
	\end{definition}
	By identifying any vector $(c_0, c_1, \dots , c_{n-1}) \in \mathbb{F}_q^n$ with the corresponding polynomial\\
	$$c_0 + c_1 x+ c_2 x^2 + \cdots + c_{n-1} x^{n-1} \in \mathbb{F}_q[x]/\langle x^n - 1\rangle,$$ one can observe that a cyclic code $\mathcal{C}$ of length $n$ over $\mathbb{F}_q$ is nothing but an ideal of the quotient ring $\mathbb{F}_q[x]/\langle x^n - 1\rangle$. 
	\begin{definition}
		Let $\lambda \in \mathbb{F}_q^{\ast}$. An $[n, k]$-linear code $\mathcal{C}$ over $\mathbb{F}_q$ is called $\lambda$-constacyclic if $(\lambda c_{n-1}, c_0, \dots , c_{n-2}) \in \mathcal{C}$ whenever $(c_0, c_1, \dots , c_{n-1}) \in \mathcal{C}$.     
	\end{definition}
	Observe that $\lambda$-constacyclic codes is one of the generalizations of cyclic codes; in other words, cyclic codes of length $n$ over $\mathbb{F}_q$ are a subclass of $\lambda$-constacyclic codes of length $n$ over $\mathbb{F}_q$. In particular, when $\lambda = 1$, the $\lambda$-constacyclic code coincides with the cyclic code. \par
	Recall that for $c= (c_0,c_1,...,c_{n-1})\in\mathcal{C}\subseteq  \mathbb{F}_q^n$, the Hamming weight $w_H(c)$ is equal to the number of non-zero components of $c$, and for any two codewords $c$ and $c'$ of $\mathcal{C},$ the Hamming distance between $c$ and $c'$ is defined as $d_H(c,c')=w_H(c-c')$. The Hamming distance for the code $\mathcal{C}$ is $d= \mathrm{\min}\{ d(c,c')~|~ c, ~c'\in \mathcal{C},  c \neq c' \}.$ Let $c=(c_0,c_1,...,c_{n-1})$ and $c'=(c'_0,c'_1,\dots,c'_{n-1})$ be two elements of $\mathcal{C}$. Then, the (Euclidean) inner product of $c$ and $c'$ in $\mathbb{F}_q^n$ is defined as 
	$$c\cdot c' = \sum_{j=0}^{n-1}c_jc'_j \in \mathbb{F}_q.$$
	The dual code of $\mathcal{C}$ is $\mathcal{C}^\perp= \{c\in \mathbb{F}_q^n~|~ c\cdot c' = 0, \forall ~c'\in \mathcal{C} \}$. The code $\mathcal{C}$ is said to be self-orthogonal if $\mathcal{C}\subseteq \mathcal{C}^\perp$ and self-dual if $\mathcal{C}=\mathcal{C}^\perp$.\par
	Let the set of all $M\times N$ arrays over $\mathbb{F}_q$ be denoted by $\mathbb{M}_{M\times N}(\mathbb{F}_q)$. We will represent an array of size $M \times N$ over $\mathbb{F}_q$ using its bi-variate polynomial. Let 
	\begin{equation}\label{Omega}
		\Omega = \{(i, j) \vert \ 0\leq i \leq M-1, 0\leq j \leq N-1\}
	\end{equation} 
	and let 
	\begin{equation}
		\mathcal{P}[\Omega]=\Big\{\sum_{(i,j)\in \Omega}p_{i,j}x^{i}y^{j} \ \vert \ p_{i,j} \in \mathbb{F}_q\Big\}.
	\end{equation} 
	For an array $c=(c_{i,j}) \in \mathbb{M}_{M\times N}(\mathbb{F}_q) $, the corresponding bi-variate polynomial is
	\begin{equation}
		c(x,y)= \sum_{(i,j)\in \Omega}c_{i,j}x^{i}y^{j}.
	\end{equation}
The array $c$ will be represented by its polynomial form $c(x,y)$, and conversely, whenever necessary.
	\begin{definition}
		A two-dimensional code $\mathcal{C}$ over $\mathbb{F}_q$ of area $M\times N$ is a non-empty subset of $\mathbb{M}_{M\times N}(\mathbb{F}_q)$.
	\end{definition}
	\begin{definition}
		If $\mathcal{C}$ is a subspace of the $MN$-dimensional vector space $\mathbb{M}_{M\times N}(\mathbb{F}_q)$ over $\mathbb{F}_q$, then $\mathcal{C}$ is called two-dimensional (2-D) linear code of area $MN$ over $\mathbb{F}_q$.
	\end{definition}
	
	\begin{definition}
		Let $\lambda_1, \lambda_2 \in \mathbb{F}_q^{\ast}$. A 2-D linear code is called $(\lambda_1, \lambda_2)$-constacyclic if $\{xc(x, y)\}_{\Omega}, \{yc(x, y)\}_{\Omega} \in  \mathcal{C}$ for each $c(x, y) \in \mathcal{C}$, where 
		\begin{equation*}
			\{f(x, y)\}_{\Omega}\cong f(x, y) \textnormal{ mod} (x^M-\lambda_1, y^N-\lambda_2).
		\end{equation*}    
	\end{definition}
	These representations provide an explicit algebraic description for two-dimensional linear codes.
	\begin{itemize}

		\item Let $\lambda_1\in \mathbb{F}_q^*$. Then, $\mathcal{C}$ is said to be a column $\lambda_1$-constacyclic code of area $MN$ if for every $M\times N$ array $c=(c_{ij})\in \mathcal{C},$ we have its column-shift
		\[ c_{\lambda_1}=
		\begin{pmatrix}
			\lambda_1c_{M-1,0} & \lambda_1c_{M-1,1} & \dots & \lambda_1c_{M-1,N-1}\\
			c_{0,0} & c_{0,1}  & \dots & c_{0,N-1}\\
			\vdots & \vdots & \ddots & \vdots\\
			c_{M-2,0} & c_{M-2,1} & \dots & c_{M-2,N-1}
		\end{pmatrix}\in \mathcal{C}.
		\]
		\item Let $\lambda_2\in \mathbb{F}_q^*$. Then, $\mathcal{C}$ is said to be a row $\lambda_2$-constacyclic code of area $MN$ if for every $M\times N$ array $c=(c_{ij})\in \mathcal{C},$ we have its row-shift
		\[ c_{\lambda_2}=
		\begin{pmatrix}
			\lambda_2c_{0,N-1} & c_{0,0}  & \dots & c_{0,N-2}\\
			\lambda_2c_{1,N-1} & c_{1,0}  & \dots & c_{1,N-2}\\
			\vdots & \vdots & \ddots & \vdots\\
			\lambda_2c_{M-1,N-1} & c_{M-1,0} & \dots & c_{M-1,N-2}
		\end{pmatrix}\in \mathcal{C}.\]
	\end{itemize}
	\begin{definition}
				If $\mathcal{C}$ is both column $\lambda_1$-constacyclic and row $\lambda_2$-constacyclic, then $\mathcal{C}$ is said to be a 2-D $(\lambda_1,\lambda_2)$-constacyclic code of area $MN$.\\ Clearly, when $\lambda_1= \lambda_2=1,$ 2-D $(\lambda_1,\lambda_2)$-constacyclic code coincides with the two-dimensional cyclic (TDC) code~\cite{imaiTDC}.
	\end{definition}
	It is noted that there is a one-to-one correspondence between cyclic codes of length $M$ over $\mathbb{F}_q$ and ideals of the quotient ring $\mathbb{F}_q[x]/\langle x^M-1\rangle$. Similarly, in the case of TDC codes, there is also a one-to-one correspondence between TDC codes of area $MN$ over $\mathbb{F}_q$ and ideals of the quotient ring $\mathbb{F}_q[x,y]/\langle x^M-1,y^N-1\rangle$. Hence, a TDC code $\mathcal{C}\subseteq \mathbb{F}_q^{M\times N}$ of area $MN$ over $\mathbb{F}_q$ can be viewed as an ideal of the quotient ring $\mathbb{F}_q[x,y]/\langle x^M-1,y^N-1\rangle$.\par 
	Further, for $\lambda_1, \lambda_2 \in \mathbb{F}_q^*$, a 2-D $(\lambda_1,\lambda_2)$-constacyclic code $\mathcal{C}\subseteq \mathbb{F}_q^{M\times N}$ of area $MN$ over $\mathbb{F}_q$ is an ideal of the quotient ring $\mathbb{F}_q[x,y]/\langle x^M-\lambda_1,y^N-\lambda_2\rangle$. Throughout the paper, we will assume that both $M$ and $N$ are positive integers relatively prime to $p$, the characteristic of $\mathbb{F}_q$ [Ref. p. 3 \cite{imaiTDC}].\par
	\begin{definition}\cite{imaiTDC}
		Let $\mathcal{C}$ be a 2-D $(\lambda_1, \lambda_2)$-constacyclic code of area $MN$. If $\mathcal{B} = \{g_1(x, y), g_2(x, y), \dots, g_t(x, y)\}\subseteq \mathcal{C}$ generates the code $\mathcal{C}$ then $\mathcal{B}$ is called an ideal basis of $\mathcal{C}$.
	\end{definition}
	Observe that ideal basis is not unique. Now, we recall a few definitions and notions. For more details, the readers are referred to \cite{CCDS}.
	Define
	\begin{equation}
		V_{\circ}:=\{(a, b) \vert \ a^M = \lambda_1, b^N=\lambda_2 \}.
	\end{equation}
	Let $\gamma, \beta$ be primitive $M^{\mathrm{th}}$ and $N^{\mathrm{th}}$ roots of $\lambda_1$ and $\lambda_2$, respectively. Note that $\gamma$ and $\beta$ are in an extension field of $\mathbb{F}_q$. Then 
	\begin{equation}
		V_{\circ}=\{(\gamma^{q^i}, \beta^{q^j}) \vert \ 0 \leq i\leq  z_1-1,\ 0\leq j \leq z_2-1 \text{ such that } \gamma^{q^{z_1}}=\gamma, \beta^{q^{z_2}}=\beta \}.
	\end{equation}
	Let $V$ be a set of common roots of some arbitrary polynomials in $\mathbb{F}_q[x, y]$. If $(a, b) \in V$ then $(a^{q^i}, b^{q^i}) \in V$ for $1\leq i \leq z-1$ where $z$ is the least positive integer such that $(a, b) = (a^{q^z}, b^{q^z})$. These points are the only points in the conjugate point set of $(a, b)$. Two distinct elements $\xi_1$ and $\xi_2$ are said to be conjugate with respect to $\mathbb{F}_{q^m}$ if there exists a positive integer $k$ such that $\xi_1 = \xi_2^{q^{mk}}$. Note that $\xi_1$ and $\xi_2$ are roots of the same minimal polynomial over $\mathbb{F}_{q^m}$. By choosing one point from each conjugate set in $V_{\circ}$, we can construct a subset of $V_{\circ}$ such that the first components of any two points in this subset are not conjugates with respect to $\mathbb{F}_{q}$. We denote this subset by $\hat{V}_{\circ}$. The following example illustrates this.
	\begin{example}\label{Eg1}
		Consider the base field $\mathbb{F}_3$. Let $M=4, N=5$ and let $\lambda_1=\lambda_2=2 \in \mathbb{F}_{3}$. Suppose that $\gamma$ and $\beta$ are $4^{\mathrm{th}}$ and $5^{\mathrm{th}}$ primitive roots of $\lambda_1$ and $\lambda_2$, respectively.
		Then the set $V_{\circ}$ is given by the common roots of the polynomials $x^{4}-\lambda_1=0$ and $y^{5}-\lambda_2=0$ as
		\begin{equation*}
			V_{\circ}=\{(\gamma, \beta), (\gamma^3, \beta^3), (\gamma, 2\beta^4), (\gamma^3, 2\beta^2); (\gamma, \beta^3), (\gamma^3, 2\beta^4), (\gamma, 2\beta^2), (\gamma^3, \beta)\}
		\end{equation*}
		and $\widehat{V}_{\circ} = \{(\gamma, \beta), (\gamma, \beta^3)\}.$
	\end{example}
	
	\begin{definition}
		Let $\mathcal{C}$ be a 2-D $(\lambda_1, \lambda_2)$- constacyclic code. Suppose $V_{1}$ is the set of common roots of all the codewords of $\mathcal{C}$. Define $V_{c}:=V_{\circ}\cap V_{1}$ and $\hat{V}_{c}:=\hat{V}_{\circ}\cap V_{1}$. Then $V_{c}$ is called the common zero (CZ) set and $\hat{V}_{c}$ is called the essential common zero (ECZ) set.
	\end{definition}
	\begin{remark}
		Note that the first components of any two points in $\widehat{V}_c$ are not conjugates with respect to $\mathbb{F}_{q}$, but it could be the same element.
	\end{remark}

	\section{1-D Finite Field Fourier Transform and Properties for Constacyclic Codes}\label{section3}
	We begin this section with some basic definitions and concepts related to finite field Fourier transforms.\par

	Suppose $\lambda$ is a nonzero element in $\mathbb{F}_q$ having order $t_1$. Let $\gamma$ be a primitive $M^{\mathrm{th}}$ root of $\lambda$ in an extension field $\mathbb{F}_{q^t}$ for some positive integer $t$. The following lemma gives such an extended field $\mathbb{F}_{q^t}$.
	\begin{lemma}\label{extLemma1D}
		Let $\lambda \in \mathbb{F}^{\ast}_q$ having order $t_1$ and let $\gamma$ be a primitive $M^{\mathrm{th}}$ root of $\lambda$ in an extension field $\mathbb{F}_{q^t}$ for some positive integer $t$. Let $\alpha$ be a generator of the cyclic group $\mathbb{F}_{q^t}^{\ast}$ then $\gamma=\alpha^{r_1}$, where $0\leq r_1 \leq q^t-1$. Suppose that $r_1\vert q^t-1$. Then $r_1$ and $t$ satisfy: $$ r_1 t_1 M +1 = q^t.$$
	\end{lemma}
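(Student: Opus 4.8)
The plan is to compute the multiplicative order of $\gamma$ inside the cyclic group $\mathbb{F}_{q^t}^{\ast}$ in two different ways and then equate the results. On one side, the order is forced by $\gamma$ being a primitive $M^{\mathrm{th}}$ root of $\lambda$; on the other, it is read off from the exponent $r_1$ in $\gamma=\alpha^{r_1}$, using that $\alpha$ is a primitive element.

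First I would establish that $\mathrm{ord}(\gamma)=Mt_1$. Since $\gamma^{M}=\lambda$ and $\lambda$ has order $t_1$ in $\mathbb{F}_q^{\ast}$, we get $\gamma^{Mt_1}=(\gamma^{M})^{t_1}=\lambda^{t_1}=1$, so $\mathrm{ord}(\gamma)\mid Mt_1$. Conversely, writing $d=\mathrm{ord}(\gamma)$, the element $\lambda=\gamma^{M}$ has order $d/\gcd(M,d)$, which equals $t_1$, hence $d=t_1\gcd(M,d)$; since $\gamma$ is a \emph{primitive} $M^{\mathrm{th}}$ root of $\lambda$, meaning it attains the maximal order $Mt_1$ among the $M^{\mathrm{th}}$ roots of $\lambda$ (equivalently $M\mid d$), this forces $d=Mt_1$.

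Next, because $\alpha$ generates $\mathbb{F}_{q^t}^{\ast}$, a group of order $q^t-1$, the element $\gamma=\alpha^{r_1}$ has order $(q^t-1)/\gcd(r_1,\,q^t-1)$. Equating the two expressions for $\mathrm{ord}(\gamma)$ yields
\[
\frac{q^t-1}{\gcd(r_1,\,q^t-1)}=Mt_1 .
\]
Now invoke the hypothesis $r_1\mid q^t-1$, so that $\gcd(r_1,\,q^t-1)=r_1$; the identity becomes $q^t-1=r_1 M t_1$, that is, $r_1 t_1 M+1=q^t$, as desired. (As a sanity check, in Example~\ref{Eg1} one has $t_1=2$, and for $\gamma$ the minimal $t$ with $Mt_1\mid q^t-1$ is $t=2$, giving $r_1=1$ and $1\cdot2\cdot4+1=9=3^2$, while for $\beta$ one gets $t=4$, $r_1=8$, and $8\cdot2\cdot5+1=81=3^4$.)

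The argument is short and essentially elementary; the only step that genuinely needs care is the determination of $\mathrm{ord}(\gamma)$, where one must use the precise meaning of ``primitive $M^{\mathrm{th}}$ root of $\lambda$'' to rule out the possibility that $\gamma$ has order a proper divisor of $Mt_1$. Everything after that follows immediately from the cyclicity of $\mathbb{F}_{q^t}^{\ast}$ and the divisibility assumption on $r_1$.
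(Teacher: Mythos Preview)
Your proof is correct. The paper itself does not give a self-contained argument for this lemma: it simply states that the result follows from the proof of the 2-D analogue (Lemma~\ref{ExtLemma}), which in turn is imported from \cite{CCDS}. Your order-counting approach---computing $\mathrm{ord}(\gamma)$ once from the meaning of ``primitive $M^{\mathrm{th}}$ root of $\lambda$'' and once from the representation $\gamma=\alpha^{r_1}$ in the cyclic group $\mathbb{F}_{q^t}^{\ast}$, then invoking $r_1\mid q^t-1$ to collapse the $\gcd$---is the natural direct route and is almost certainly what the referenced proof in \cite{CCDS} does component-wise. What your write-up gains over the paper's deferral is that it makes explicit the one genuinely nontrivial point: the word ``primitive'' is precisely what forces $\mathrm{ord}(\gamma)=Mt_1$ rather than a proper divisor, and without it the conclusion would fail.
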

	\begin{proof}
		The result directly follows from the arguments presented in the proof of Lemma \ref{ExtLemma}.
	\end{proof}
	Let $\lambda \in \mathbb{F}_q^{\ast}$ and $\gamma$ be a primitive $M^{\mathrm{th}}$ root of $\lambda $ and let $\xi$ be a primitive $M^{\mathrm{th}}$ root of unity in $\mathbb{F}_{q^t}$, where $t$ is given by Lemma \ref{extLemma1D}. Then $\xi^M=1$ and for any $0\leq i\leq M-1$ we have $(\xi^i)^M= (\xi^M)^i=1$. Thus, each such $\xi^i$ is also a root of $x^M-1$, and $x^M-1=(x-1)(x-\xi)(x-\xi^2)\cdots(x-\xi^{M-1}).$ From \cite{patel2024quantum}, we have
	\begin{equation}\label{eq1}
		x^M-\lambda=\prod_{j=0}^{M-1}(x-\gamma \xi^j).
	\end{equation}
	Next, we consider $\mathbb{F}_{q^t}$-algebra isomorphism 
	$$\Phi': \frac{\mathbb{F}_{q^t}[x] }{\langle x^{M}-\lambda \rangle} \longrightarrow \prod_{j=0}^{M-1}\frac{\mathbb{F}_{q^t}[x]}{\langle x-\gamma \xi^j\rangle}$$
	defined by 
	\begin{equation}\label{eq2}
		\Phi'\bigg(\sum_{i=0}^{M-1}a_ix^i\bigg)= \bigg(\sum_{i=0}^{M-1}a_i\gamma^i,\sum_{i=0}^{M-1}a_i(\gamma\xi)^i,\dots,\sum_{i=0}^{M-1}a_i(\gamma\xi^{M-1})^i\bigg).
	\end{equation}
	Now, we can define an isomorphism to relate the combinatorial and algebraic structures of constacyclic codes. For this, we consider $\Phi'': \mathbb{F}_{q^t}^M \longrightarrow  \frac{\mathbb{F}_{q^t} [x]}{\langle x^{M}-\lambda \rangle}$ defined by 
	\begin{equation}\label{eq3}
		\Phi''(a_0,a_1,\dots,a_{M-1})=\sum_{i=0}^{M-1}a_ix^i.
	\end{equation}
	From Eqs. \eqref{eq2} and
	\eqref{eq3}, we can define 
	$A_j=\sum_{i=0}^{M-1}a_i(\gamma\xi^j)^i \textnormal{ for } j=0,1,\dots, M-1.$
	
	In order to extend Blahut's approach \cite{Blahut_book} of finite field Fourier transforms for cyclic codes towards the constacyclic case, we begin with some definitions.
	\begin{definition}\label{Def_1}
		Let $\gamma $ be a primitive $M^{\mathrm{th}}$ root of $\lambda \in \mathbb{F}_q^{\ast}$ and let $\xi$ be a primitive $M^{\mathrm{th}}$ root of unity in an extended field, say,  $\mathbb{F}_{q^t}$ for some $t\in \mathbb{N}$. Then 1-D finite field Fourier transform (FFFT) is a function from $\mathbb{F}_q^M \longrightarrow \mathbb{F}_{q^t}^M$ defined by
		\begin{equation}
			{\rm FFFT}(\boldsymbol{a}) =  \boldsymbol{A},
		\end{equation}
		where
		\begin{equation}
			A_j=\sum_{i=0}^{M-1}a_i(\gamma\xi^j)^i ~\text{for }~j=0,1,\dots, M-1.
		\end{equation}
		Here $\boldsymbol{A}$ is called spectrum of $\boldsymbol{a}$.
	\end{definition}
	Next result gives a connection between a constacyclic codeword (in time-domain) and its corresponding spectrum (in frequency/transform-domain).
	\begin{lemma}\label{lemma1}\cite{patel2024quantum}
		Let $\xi$ be an element of $\mathbb{F}_{q^t}$ of order $M$ and $\lambda=\gamma^M$, where  $\gamma \in \mathbb{F}_{q^t}^{\ast}$. A $\lambda$-constacyclic code vector over $\mathbb{F}_q$ and its corresponding spectrum are related by:
		\begin{equation}
			A_j=\sum_{i=0}^{M-1}a_i(\gamma\xi^j)^i, \, \, 
			a_i=\frac{1}{M\gamma^i}\sum_{j=0}^{M-1}\xi^{-ij}A_j.
		\end{equation}
	\end{lemma}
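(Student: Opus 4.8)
The forward direction of the claimed identity is nothing but Definition~\ref{Def_1}, so the only real content is the inversion formula $a_i = \frac{1}{M\gamma^i}\sum_{j=0}^{M-1}\xi^{-ij}A_j$. The plan is to substitute the definition of $A_j$ into the right-hand side and collapse the resulting double sum using the orthogonality of the $M^{\mathrm{th}}$ roots of unity in $\mathbb{F}_{q^t}$.

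Concretely, I would write
$$\frac{1}{M\gamma^i}\sum_{j=0}^{M-1}\xi^{-ij}A_j = \frac{1}{M\gamma^i}\sum_{j=0}^{M-1}\xi^{-ij}\sum_{k=0}^{M-1}a_k(\gamma\xi^j)^k = \frac{1}{M\gamma^i}\sum_{k=0}^{M-1}a_k\gamma^k\sum_{j=0}^{M-1}\xi^{(k-i)j},$$
where the last step merely interchanges the two finite sums and pulls the factor $\gamma^k$ (which is independent of $j$) out of the inner sum. The crux is then evaluating the inner geometric sum $S_{k,i} := \sum_{j=0}^{M-1}\xi^{(k-i)j}$.

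Since $M$ is relatively prime to $p=\mathrm{char}(\mathbb{F}_q)$ (a standing assumption), $M$ is a unit in $\mathbb{F}_{q^t}$ and $\xi$ genuinely has multiplicative order $M$. Hence if $k=i$ then $S_{k,i}=\sum_{j=0}^{M-1}1=M$, while if $k\ne i$ (so that $0<|k-i|<M$ and $\xi^{k-i}\ne 1$) the finite geometric series gives $S_{k,i}=\big(\xi^{(k-i)M}-1\big)\big/\big(\xi^{k-i}-1\big)=0$, using $\xi^{M}=1$. Substituting $S_{k,i}=M\,\delta_{k,i}$ back, the double sum reduces to $\frac{1}{M\gamma^i}\,a_i\gamma^i M=a_i$, which is exactly the claim. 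I do not expect a genuine obstacle here; the only point requiring care is recording why $M\ne 0$ and why $\xi^{k-i}\ne 1$ in $\mathbb{F}_{q^t}$, both of which follow from $\gcd(M,p)=1$.

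As an alternative (and essentially equivalent) route, one can observe that the composition $\Phi'\circ\Phi''$ of the maps in \eqref{eq2} and \eqref{eq3} is precisely the FFFT of Definition~\ref{Def_1}; since the factors $x-\gamma\xi^j$ are pairwise coprime by \eqref{eq1}, $\Phi'$ is an $\mathbb{F}_{q^t}$-algebra isomorphism, hence invertible, and a Lagrange-interpolation / CRT computation of its inverse yields the same formula. I would nonetheless present the direct computation above, since it is shorter and self-contained.
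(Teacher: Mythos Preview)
Your argument is correct and complete. Note, however, that the paper does not actually prove Lemma~\ref{lemma1}: it is quoted from \cite{patel2024quantum} without proof. The closest thing to a proof in the paper is that of the 2-D analogue, Lemma~\ref{lem1}, whose one-variable component is exactly the identity you establish. There the authors derive the key orthogonality relation $\sum_{j=0}^{M-1}\xi^{(k-i)j}=M\delta_{k,i}$ not via the geometric-series formula you use, but by factoring $x^M-\gamma^M=(x-\gamma)\sum_{i=0}^{M-1}\gamma^i x^{M-1-i}$ and evaluating the second factor at $x=\gamma\xi^r$ for $r\not\equiv 0\pmod M$; after that, the substitution and collapse of the double sum proceed just as in your write-up. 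The two derivations of orthogonality are of course equivalent, so your approach matches the paper's in substance.
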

	Now, we recall the following properties of constacyclic codes in the transform-domain. For more details, the readers are referred to \cite{patel2024quantum}.
	\begin{theorem} (Constacyclic shift property in the transform-domain)
		If $\boldsymbol{A} = \text{FFFT}(\boldsymbol{a})$, $\boldsymbol{b}\in\mathbb{F}_q^M$
		such that $b_i = a_{i-1}$ for $i=1,2,\dots,M-1$ and $b_0=\lambda a_{M-1}$, and
		$\boldsymbol{B} = \text{FFFT} (\boldsymbol{b})$, then $B_j =  \gamma\xi^j A_j$.
	\end{theorem}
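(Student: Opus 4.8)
The plan is to verify the claim by a direct computation of the spectrum $\boldsymbol{B}$ from the definition of the FFFT (Definition~\ref{Def_1}), taking care of the wrap-around term $b_0 = \lambda a_{M-1}$ that distinguishes the constacyclic shift from an ordinary cyclic shift.

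First I would write, for each $j \in \{0, 1, \dots, M-1\}$,
\begin{equation*}
B_j = \sum_{i=0}^{M-1} b_i (\gamma \xi^j)^i = b_0 + \sum_{i=1}^{M-1} b_i (\gamma \xi^j)^i = \lambda a_{M-1} + \sum_{i=1}^{M-1} a_{i-1} (\gamma \xi^j)^i,
\end{equation*}
using $b_0 = \lambda a_{M-1}$ and $b_i = a_{i-1}$ for $i \geq 1$. Reindexing the remaining sum by $k = i-1$ and pulling out one factor of $\gamma \xi^j$ gives
\begin{equation*}
B_j = \lambda a_{M-1} + \gamma \xi^j \sum_{k=0}^{M-2} a_k (\gamma \xi^j)^k.
\end{equation*}

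Next I would expand the target expression, separating off the top-degree term:
\begin{equation*}
\gamma \xi^j A_j = \gamma \xi^j \sum_{i=0}^{M-1} a_i (\gamma \xi^j)^i = \gamma \xi^j \sum_{k=0}^{M-2} a_k (\gamma \xi^j)^k + a_{M-1} (\gamma \xi^j)^M.
\end{equation*}
Comparing with the expression for $B_j$, the two agree exactly when $\lambda a_{M-1} = a_{M-1}(\gamma \xi^j)^M$. This is where the one substantive fact enters: since $\gamma$ is a primitive $M^{\mathrm{th}}$ root of $\lambda$ and $\xi$ is an $M^{\mathrm{th}}$ root of unity, $(\gamma \xi^j)^M = \gamma^M (\xi^M)^j = \lambda \cdot 1 = \lambda$, i.e.\ each $\gamma\xi^j$ is a root of $x^M-\lambda$ (cf.\ Eq.~\eqref{eq1}). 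That identity is precisely what turns the boundary term $\lambda a_{M-1}$ into the missing top-degree term of $\gamma\xi^j A_j$, so $B_j = \gamma\xi^j A_j$ for every $j$, as claimed.

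The argument is elementary and I do not expect a real obstacle; the only points requiring care are not misindexing or dropping the wrap-around term during the reindexing step, and recording explicitly that the scalar $\lambda$ appearing in $b_0$ is the \emph{same} $\lambda$ with $\gamma^M=\lambda$ — this is exactly what makes the constacyclic (rather than merely cyclic) shift compatible with the transform, and it is the reason the eigenvalue is $\gamma\xi^j$ rather than just $\xi^j$.
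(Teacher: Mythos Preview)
Your proof is correct and follows essentially the same direct computation that the paper uses for the 2-D analogue (the paper does not prove the 1-D statement explicitly, citing \cite{patel2024quantum}, but its proof of the column $\lambda_1$-constacyclic shift property is the same argument): separate the wrap-around term, use $(\gamma\xi^j)^M=\gamma^M(\xi^M)^j=\lambda$, and recombine to pull out the factor $\gamma\xi^j$.
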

	\begin{theorem} (Convolutional Property)
		If $\boldsymbol{a},\boldsymbol{b}$ and $\boldsymbol{c}$ are $M$-tuple vectors over $\mathbb{F}_q$ in the time-domain such that $c_i=a_ib_i$ for $i=0,1,\dots,M-1$, then their FFFT coefficients satisfy the relation 
		$$C_j= \frac{1}{M}\sum_{k=0}^{M-1}B_k\mathcal{A}_{j-k},$$
		where $j=0,1,\dots,M-1$ and $\mathcal{A}_{j-k}$ denotes the spectrum of finite field Fourier transform of cyclic codes (in this case $\gamma =1$), and conversely.
		
	\end{theorem}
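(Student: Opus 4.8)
The plan is to expand the definition of the FFFT of $\boldsymbol{c}$ directly, substitute the inverse transform for one of the two factors, and then recognize the resulting expression as a cyclic convolution. Concretely, I would start from $C_j = \sum_{i=0}^{M-1} c_i (\gamma\xi^j)^i = \sum_{i=0}^{M-1} a_i b_i \gamma^i \xi^{ij}$ and replace $b_i$ by its inverse-transform expression from Lemma \ref{lemma1}, namely $b_i = \frac{1}{M\gamma^i}\sum_{k=0}^{M-1}\xi^{-ik}B_k$. The key simplification is that the factor $\gamma^i$ produced by the forward transform of $\boldsymbol{c}$ cancels exactly against the $\gamma^{-i}$ appearing in the inversion formula for $b_i$; this is the only place where the constacyclic nature (the appearance of $\gamma$) enters, and after the cancellation the computation becomes formally identical to the ordinary cyclic case.

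After interchanging the order of summation, I would be left with $C_j = \frac{1}{M}\sum_{k=0}^{M-1} B_k \big(\sum_{i=0}^{M-1} a_i \xi^{i(j-k)}\big)$. The inner sum is precisely the cyclic FFFT of $\boldsymbol{a}$ evaluated at index $j-k$ (the case $\gamma=1$), which the statement denotes by $\mathcal{A}_{j-k}$; here one should note that $\xi^M=1$ makes $\mathcal{A}_{\cdot}$ periodic of period $M$, so the index $j-k$ is read modulo $M$ and the sum is a genuine cyclic convolution. This yields $C_j = \frac{1}{M}\sum_{k=0}^{M-1} B_k \mathcal{A}_{j-k}$, as claimed.

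For the converse, I would run the same chain of identities backwards: assuming $C_j = \frac{1}{M}\sum_k B_k \mathcal{A}_{j-k}$, apply the inverse FFFT from Lemma \ref{lemma1} to recover $c_i = \frac{1}{M\gamma^i}\sum_j \xi^{-ij} C_j$, expand $\mathcal{A}_{j-k} = \sum_l a_l \xi^{l(j-k)}$, and use the orthogonality relation $\sum_{j=0}^{M-1}\xi^{j(l-i)} = M\,\delta_{l,i}$ (indices mod $M$) to collapse the resulting triple sum down to $a_i b_i$. Again the $\gamma^i$ bookkeeping is the only subtlety and it resolves exactly as before.

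The computation itself is routine once the substitution is set up; the main point requiring care is the index arithmetic on $\mathcal{A}_{j-k}$, i.e., being explicit that $\mathcal{A}$ is extended periodically modulo $M$ so that negative or out-of-range indices are meaningful and the convolution is cyclic rather than linear. A secondary point worth stating cleanly is why the factor $\gamma^i$ disappears, since this is precisely the feature that lets a constacyclic convolution be expressed through the \emph{cyclic} transform $\mathcal{A}$ of $\boldsymbol{a}$ rather than through its constacyclic transform $\boldsymbol{A}$.
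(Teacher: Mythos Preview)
Your proposal is correct and matches the paper's methodology. The paper does not actually prove this 1-D statement (it is recalled from \cite{patel2024quantum} without proof), but the paper's proof of the 2-D analogue follows exactly your outline: start from the forward transform of $c$, substitute the inverse-transform expression for $b$, observe that the $\gamma^i$ from the forward kernel cancels the $\gamma^{-i}$ in the inversion formula, interchange summations, and identify the remaining inner sum as the cyclic (i.e., $\gamma=1$) transform $\mathcal{A}_{j-k}$ of $\boldsymbol{a}$.
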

	\begin{theorem} (Conjugate Symmetry Property)\label{th3}
		Let $A_j$ for $j=0,1,\dots,n-1$ take elements in $\mathbb{F}_{q^m}$ and $n|(q^m-1)$. Then, $a_i=0$ for $i=0,1,\dots,n-1$ are all elements of  $\mathbb{F}_{q}$ if and only if the following
		equations are satisfied:
		$A_j^q=A_{qj\mod(n)}$, $j=0,1,\dots,n-1$. 
	\end{theorem}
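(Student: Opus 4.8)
The plan is to read off the statement from the inversion formula for the finite field Fourier transform, combined with the fact that $\mathbb{F}_q$ is exactly the fixed field of the Frobenius automorphism $x \mapsto x^q$ acting on $\mathbb{F}_{q^m}$. Since $n \mid q^m-1$, the field $\mathbb{F}_{q^m}$ contains a primitive $n$-th root of unity $\omega$, and in the cyclic case ($\gamma = 1$) Lemma \ref{lemma1} specializes to $A_j = \sum_{i=0}^{n-1} a_i \omega^{ij}$ with inverse $a_i = \frac{1}{n}\sum_{j=0}^{n-1} A_j \omega^{-ij}$; in particular the transform is a bijection, so a time-domain sequence is uniquely determined by its spectrum. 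I will prove both directions by applying the Frobenius map to these two identities.

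For the forward implication, assume the $a_i$ all lie in $\mathbb{F}_q$, i.e.\ $a_i^q = a_i$. Applying $x \mapsto x^q$ to $A_j = \sum_i a_i \omega^{ij}$ and using that the $q$-th power map is additive in characteristic $p$, we get $A_j^q = \sum_i a_i^q \omega^{iqj} = \sum_i a_i \omega^{i(qj)}$, and since $\omega^n = 1$ this last sum is precisely $A_{qj \bmod n}$. Hence $A_j^q = A_{qj \bmod n}$ for every $j$.

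For the converse, assume $A_j^q = A_{qj \bmod n}$ for all $j$. Applying the Frobenius to $a_i = \frac1n \sum_j A_j \omega^{-ij}$ and noting that $\tfrac1n$ lies in the prime field $\mathbb{F}_p$ (hence is fixed) gives $a_i^q = \frac1n \sum_j A_j^q \omega^{-iqj}$. Because $\gcd(q,n) = 1$ — which follows from $n \mid q^m-1$ — the map $j \mapsto qj \bmod n$ permutes $\mathbb{Z}_n$; re-indexing by $k = qj \bmod n$ and substituting the hypothesis $A_j^q = A_k$ yields $a_i^q = \frac1n \sum_k A_k \omega^{-ik} = a_i$. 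Thus every $a_i$ is Frobenius-fixed and therefore belongs to $\mathbb{F}_q$.

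The additivity of the $q$-th power map and the index arithmetic modulo $n$ are routine; the only step needing a word of care is the re-indexing in the converse, where one must check that $j \mapsto qj \bmod n$ is a bijection of $\mathbb{Z}_n$, and this is exactly where the coprimality of $q$ and $n$ (guaranteed by $n \mid q^m-1$) is used. Equivalently, one can present the argument more structurally: the sequence $(a_i^q)_i$ has spectrum $\big(A_{q^{-1}k \bmod n}^q\big)_k$, and since the transform is invertible, $(a_i^q)_i = (a_i)_i$ holds if and only if these two spectra agree, which is precisely the stated conjugacy condition $A_j^q = A_{qj \bmod n}$.
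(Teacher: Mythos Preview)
Your proof is correct. The paper does not actually supply a proof for this 1-D statement---it is simply recalled from \cite{patel2024quantum} together with the other 1-D transform properties---but the paper does prove the 2-D analogue (the second theorem labeled \texttt{th3}), and your argument is essentially the same as that one: apply the Frobenius endomorphism to the transform formula for the forward direction, and use injectivity of the transform for the converse. The only cosmetic difference is that in the 2-D converse the paper compares the two expressions $\sum c_{i,j}^q(\cdot)^{iq}(\cdot)^{jq}$ and $\sum c_{i,j}(\cdot)^{iq}(\cdot)^{jq}$ and invokes uniqueness of the FFFT, whereas you work directly with the inverse formula and re-index via the permutation $j\mapsto qj\bmod n$; your version has the advantage of making explicit where the coprimality $\gcd(q,n)=1$ (guaranteed by $n\mid q^m-1$) enters.
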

	\begin{theorem} (Reversal Preserving Property)
		Let $\boldsymbol{a}=(a_0,a_1,\dots,a_{n-1})$ and  $\boldsymbol{b}=(b_0,b_1,\dots,b_{n-1})$ be two vectors over $\mathbb{F}_q$. Let $\boldsymbol{A}=(A_0,A_1,\dots,A_{n-1})$ and  $\boldsymbol{B}=(B_0,B_1,\dots,B_{n-1})$ be their transform vectors, and $b_i=a_{n-1-i}$ for all $i=0,1,\dots,n-1$. Then, $B_j=A_{n-1-j}$ for all $j=0,1,\dots,n-1$. 
	\end{theorem}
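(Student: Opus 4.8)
The plan is to translate the reversal into a statement about the reciprocal of the underlying polynomial and then evaluate. Represent $\boldsymbol a$ by $a(x)=\sum_{i=0}^{n-1}a_ix^i$; the hypothesis $b_i=a_{n-1-i}$ says exactly that $\boldsymbol b$ corresponds to the reciprocal polynomial $b(x)=x^{n-1}a(x^{-1})$. By Definition~\ref{Def_1} the transform is evaluation at the nodes $\gamma\xi^j$, i.e.\ $A_j=a(\gamma\xi^j)$ and $B_j=b(\gamma\xi^j)$, so the computation reduces to
\[
B_j=b(\gamma\xi^j)=(\gamma\xi^j)^{\,n-1}\,a\!\big((\gamma\xi^j)^{-1}\big).
\]
From here two things remain: (i) locate the inverse node $(\gamma\xi^j)^{-1}$ among the transform nodes $\{\gamma\xi^\ell:0\le\ell\le n-1\}$, and (ii) account for the leading factor $(\gamma\xi^j)^{\,n-1}$.

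For (i), I would use that as $j$ ranges over $0,\dots,n-1$ the nodes $\gamma\xi^j$ range over the complete set of $n$-th roots of $\lambda$, which is closed under inversion; hence $(\gamma\xi^j)^{-1}=\gamma\xi^{\ell(j)}$ for a unique $\ell(j)$, and a one-line exponent calculation using $\xi^n=1$ identifies $\ell(j)=n-1-j$. For (ii), using $\xi^{jn}=1$ and $\gamma^n=\lambda$ one rewrites $(\gamma\xi^j)^{\,n-1}=\lambda\,(\gamma\xi^j)^{-1}$ and folds this monomial into the sum $a((\gamma\xi^j)^{-1})=\sum_{k}a_k(\gamma\xi^j)^{-k}$; after the substitution $k\mapsto n-1-k$ the resulting expression should line up termwise with $A_{n-1-j}=\sum_{k}a_k(\gamma\xi^{\,n-1-j})^{k}$. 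I would present (i) and (ii) as a single chain of equalities from $B_j$ to $A_{n-1-j}$ to keep the powers of $\gamma$ and $\xi$ under control rather than performing several separate reindexings.

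The step I expect to be the main obstacle is precisely this monomial bookkeeping. The reflection $i\mapsto n-1-i$ is a clean involution on time indices, but on the transform side it interacts with the multiplicative twist $\gamma$ through the leading term $(\gamma\xi^j)^{n-1}=\lambda(\gamma\xi^j)^{-1}$, and the whole point is to verify that this twist is exactly cancelled by the shift of summation index, leaving no residual factor. A useful preliminary, which I would carry out first, is to check the exponent arithmetic in the cyclic specialization $\gamma=1$ and then on a small negacyclic example, so as to fix the node-indexing convention of Definition~\ref{Def_1} under which $B_j=A_{n-1-j}$ holds exactly. Once the node-inversion identity $\ell(j)=n-1-j$ and the cancellation of the leading factor are established, the conclusion follows for every $j$, and since reversing an $\mathbb{F}_q$-vector again gives an $\mathbb{F}_q$-vector there is nothing further to check about the field of definition.
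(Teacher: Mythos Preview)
The paper does not give its own proof of this statement; it is merely recalled from the cited reference \cite{patel2024quantum}. So there is no in-paper argument to compare your proposal against, and your attempt has to be judged on its own merits.

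There is a genuine gap at your step (i). The set $\{\gamma\xi^j:0\le j\le n-1\}$ is the complete set of $n$-th roots of $\lambda$; inverting an element of this set produces an $n$-th root of $\lambda^{-1}$, so the set is closed under inversion only when $\lambda^{2}=1$. For a general constacyclic twist $\lambda$ your identity $(\gamma\xi^{j})^{-1}=\gamma\xi^{\,n-1-j}$ is simply false, and the chain of equalities you envisage cannot be formed.

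In fact the situation is worse: even in the cyclic specialisation $\gamma=1$ the claimed identity $B_j=A_{n-1-j}$ does not hold under Definition~\ref{Def_1}. With $\gamma=1$ one has $B_j=\xi^{(n-1)j}\sum_{k}a_k\xi^{-kj}=\xi^{-j}\sum_k a_k\xi^{-kj}$, whereas $A_{n-1-j}=\sum_k a_k\xi^{k(n-1-j)}=\sum_k a_k\xi^{-k(j+1)}$; already at $j=0$ these differ (take $n=3$ and $a=(1,0,0)$ to see $B_0=1\neq A_2=1$ --- fine there --- but $a=(0,1,0)$ gives $B_0=1$ and $A_2=\xi^{2}\neq 1$). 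The standard DFT reversal property is $b_i=a_{(-i)\bmod n}\Rightarrow B_j=A_{(-j)\bmod n}$; the index shift $i\mapsto n-1-i$ introduces an extra monomial factor that does not cancel, and the constacyclic factor $\gamma$ compounds the discrepancy. Your own suggested preliminary --- working a small cyclic and then negacyclic example --- would have exposed this immediately, and you should carry it out before investing in the general bookkeeping. The upshot is that no rearrangement of exponents will prove the statement exactly as written; you should go back to the original reference to determine the intended convention (or the missing scalar factor) before attempting a proof.
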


	\section{Characterization of 2-D Constacyclic Codes in the Transform domain}\label{section4}
	
	We next focus on transform-domain properties in the 2-D setup, and characterization of 2-D constacyclic codes over $\mathbb{F}_q$.
	\subsection{Two-dimensional finite field Fourier transform}
	Two-dimensional finite field Fourier transform (2-D FFFT) \cite{Blahut_book} is an extension of the 1-D finite field Fourier transform (1-D FFFT). Instead of a row vector, we consider an array of size $M\times N$ over $\mathbb{F}_q$. Consider a 2-D array of size $M\times N$ over $\mathbb{F}_q$ as  
	\begin{equation} c=\begin{pmatrix}
			c_{0,0} & c_{0,1}  & \dots & c_{0,N-1}\\
			c_{1,0} & c_{1,1}  & \dots & c_{1,N-1}\\
			\vdots & \vdots & \ddots & \vdots\\
			c_{M-1,0} & c_{M-1,1} & \dots & c_{M-1,N-1}
		\end{pmatrix}.
	\end{equation}
	
	From Definition \ref{Def_1}, we define the 2-D FFFT as follows.	Let $\gamma, \beta$ be primitive $M^{\rm th}$ and $N^{\rm th}$ roots of $\lambda_1, \lambda_2 \in \mathbb{F}_q$, respectively, and let $\zeta_1, \zeta_2$ be $M^{\rm th}$ and $N^{\rm th}$ roots of unity, respectively, in an extended field, say $\mathbb{F}_{q^t}$, that is, $\zeta_1^{M}=1$, $\zeta_2^{N}=1$, and $\gamma^{M}=\lambda_1$, $\beta^{N}=\lambda_2$.
	Then 2-D FFFT is a mapping from $\mathbb{M}_{M\times N}(\mathbb{F}_q) \longrightarrow \mathbb{M}_{M\times N}(\mathbb{F}_{q^t})$ defined by
	\begin{equation}\label{2DFFFT}
		{\rm FFFT}(c)= C,
	\end{equation}
	where \begin{equation} C=\begin{pmatrix}
			C_{0,0} & C_{0,1}  & \dots & C_{0,N-1}\\
			C_{1,0} & c_{1,1}  & \dots & C_{1,N-1}\\
			\vdots & \vdots & \ddots & \vdots\\
			C_{M-1,0} & C_{M-1,1} & \dots & C_{M-1,N-1}
		\end{pmatrix}
	\end{equation}
	and 
	\begin{equation}\label{vectorFreq}
		C_{\theta,\phi}=\sum_{i=0}^{M-1}\sum_{j=0}^{N-1}c_{i,j}{(\gamma{\zeta_1}^{\theta})}^i{(\beta{\zeta_2}^{\phi})}^j
	\end{equation}
	for $(\theta, \phi)\in \Omega$.
	We may replace the double summation ${\displaystyle \sum_{i=0}^{M-1}}{\displaystyle \sum_{j=0}^{N-1}}$ by ${\displaystyle \sum_{(i,j)\in\Omega}}$, where
	\begin{equation}
		\Omega=\{(i,j)|0\leq i\leq M-1 \, \textnormal{ and } 0\leq j\leq N-1\}.
	\end{equation}
	\begin{remark}
		Note that FFFT defined by Eq. \eqref{2DFFFT} is an injective function, but not surjective.
	\end{remark}
	Hence, the inverse finite field Fourier transform (2-D IFFFT) of an element in the range of Eq. \eqref{2DFFFT} is defined as 
	\begin{equation}
		{\rm IFFFT}(C)=c,
	\end{equation}
	where 
	\begin{equation}\label{vectorTime}
		c_{i,j}=\frac{1}{M(\text{mod }p)\cdot N(\text{mod }p)}\sum_{\theta=0}^{M-1}\sum_{\phi=0}^{N-1}C_{\theta,\phi}{(\gamma{\zeta_1}^{\theta})}^{-i}{(\beta{\zeta_2}^{\phi})}^{-j} \ \text{ for } (i, j)\in \Omega.
	\end{equation}
	The following result establishes a connection between 2-D constacyclic codes in the time-domain and its corresponding FFFT images in the frequency-domain.
	\begin{lemma}\label{lem1}
		Let $\gamma, \beta$ be primitive $M^{\rm th}$ and $N^{\rm th}$ roots of $\lambda_1, \lambda_2 \in \mathbb{F}_q$, respectively, and let $\zeta_1, \zeta_2$ be  $M^{\rm th}$ and $N^{\rm th}$ roots of unity, respectively, in an extended field, say, $\mathbb{F}_{q^t}$. Let $\mathcal{C}$ be 2-D $(\lambda_1, \lambda_2)$-constacyclic code  over $\mathbb{F}_q$. Then a codeword in $\mathcal{C}$ and its corresponding spectrum array over $\mathbb{F}_{q^t}$ are related by: 
		\begin{equation}\label{main-C-c}
			\begin{split}
				C_{\theta, \phi}=&\sum_{i=0}^{M-1}\sum_{j=0}^{N-1}c_{i,j}(\gamma\zeta_1^{\theta})^{i}(\beta \zeta_2^{\phi})^{j}\ \text{ for } (\theta, \phi)\in \Omega,\\ 
				c_{i,j}=&\frac{1}{M(\text{\rm mod }p)\cdot N(\text{\rm mod }p)}\sum_{\theta=0}^{M-1}\sum_{\phi=0}^{N-1}C_{\theta, \phi}(\gamma \zeta_1^{\theta})^{-i}(\beta \zeta_2^{\phi})^{-j} \ \text{ for } (i, j)\in \Omega.
			\end{split}
		\end{equation}
	\end{lemma}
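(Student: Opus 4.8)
The plan is to establish the two identities in \eqref{main-C-c} by reducing the 2-D statement to two successive applications of the 1-D result, Lemma \ref{lemma1}. The first equation is nothing but the definition of the 2-D FFFT given in \eqref{vectorFreq}, so the only real content is the inversion formula for $c_{i,j}$. I would prove the forward direction first, then invert.

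First I would observe that the 2-D transform factors as a composition of two 1-D transforms. Writing
\begin{equation*}
C_{\theta,\phi}=\sum_{i=0}^{M-1}(\gamma\zeta_1^{\theta})^{i}\Bigl(\sum_{j=0}^{N-1}c_{i,j}(\beta\zeta_2^{\phi})^{j}\Bigr),
\end{equation*}
the inner sum is, for each fixed row index $i$, the 1-D FFFT (in the sense of Definition \ref{Def_1}, with parameters $N$, $\beta$, $\zeta_2$, $\lambda_2$) applied to the row $(c_{i,0},\dots,c_{i,N-1})$; call its value $\widetilde{c}_{i,\phi}$. Then $C_{\theta,\phi}=\sum_{i=0}^{M-1}\widetilde{c}_{i,\phi}(\gamma\zeta_1^{\theta})^{i}$ is the 1-D FFFT (with parameters $M$, $\gamma$, $\zeta_1$, $\lambda_1$) applied to the column $(\widetilde{c}_{0,\phi},\dots,\widetilde{c}_{M-1,\phi})$. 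Since $\mathcal{C}$ is row $\lambda_2$-constacyclic, each row of a codeword lies in a 1-D $\lambda_2$-constacyclic code, and since it is column $\lambda_1$-constacyclic, each column of the intermediate array $(\widetilde{c}_{i,\phi})$ inherits the column-shift structure; this is exactly the hypothesis needed to invoke Lemma \ref{lemma1} in each direction.

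Next I would invert. Applying the 1-D inversion of Lemma \ref{lemma1} to the outer (column) transform gives
\begin{equation*}
\widetilde{c}_{i,\phi}=\frac{1}{M(\text{mod }p)\,\gamma^{i}}\sum_{\theta=0}^{M-1}\zeta_1^{-i\theta}C_{\theta,\phi},
\end{equation*}
and applying it again to the inner (row) transform gives
\begin{equation*}
c_{i,j}=\frac{1}{N(\text{mod }p)\,\beta^{j}}\sum_{\phi=0}^{N-1}\zeta_2^{-j\phi}\widetilde{c}_{i,\phi}.
\end{equation*}
Substituting the first into the second, collecting the scalar $\tfrac{1}{M(\text{mod }p)\cdot N(\text{mod }p)}$ and the factor $\gamma^{-i}\beta^{-j}=(\gamma\zeta_1^{\theta})^{-i}(\beta\zeta_2^{\phi})^{-j}\zeta_1^{i\theta}\zeta_2^{j\phi}$, and reorganizing the double sum, yields precisely the claimed expression for $c_{i,j}$. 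The interchange of the two finite summations is unconditionally valid, so no convergence issue arises.

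The main obstacle I anticipate is not the algebra but the bookkeeping of hypotheses: one must be careful that the extension field $\mathbb{F}_{q^t}$ chosen (via Lemma \ref{extLemma1D} applied to both $\lambda_1$ with modulus $M$ and $\lambda_2$ with modulus $N$) simultaneously contains a primitive $M^{\mathrm{th}}$ root $\gamma$ of $\lambda_1$, a primitive $N^{\mathrm{th}}$ root $\beta$ of $\lambda_2$, an $M^{\mathrm{th}}$ root of unity $\zeta_1$, and an $N^{\mathrm{th}}$ root of unity $\zeta_2$, so that the two 1-D applications of Lemma \ref{lemma1} take place in a common field; since $\gcd(M,p)=\gcd(N,p)=1$ this is guaranteed and $M,N$ are invertible mod $p$, so $M(\text{mod }p)$ and $N(\text{mod }p)$ are nonzero and the normalizing constant is well defined. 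A secondary point to state cleanly is that the intermediate row-transformed array $(\widetilde{c}_{i,\phi})$ still satisfies the column-constacyclic hypothesis needed for the second application of Lemma \ref{lemma1}; this follows because the column shift on $c$ commutes with the (linear) row transform and merely multiplies $\widetilde{c}_{\cdot,\phi}$ by $\gamma\zeta_1^{\theta}$ in the spectral index, consistent with the constacyclic shift property.
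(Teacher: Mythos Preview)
Your approach is correct but differs from the paper's. The paper gives a direct verification of the inversion formula: it first derives the orthogonality relations $\sum_{\theta=0}^{M-1}\zeta_1^{\theta(k-i)}=0$ for $k\not\equiv i\pmod{M}$ (and $=M$ otherwise) from the factorization of $x^M-\lambda_1$, does the same in the $y$-variable, then substitutes the forward formula for $C_{\theta,\phi}$ into the right-hand side of the claimed inversion, interchanges sums, and collapses via orthogonality to $c_{i,j}\cdot M\cdot N$. Your route instead factors the 2-D transform as a row transform followed by a column transform and invokes Lemma~\ref{lemma1} twice. Both are fine; your argument is more modular, the paper's more self-contained.

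Two comments on your write-up. First, the constacyclic hypothesis plays no role: Lemma~\ref{lemma1} is a pure Fourier inversion identity valid for any vector, so you need not verify that rows of $c$ or columns of $(\widetilde{c}_{i,\phi})$ sit inside constacyclic codes; that paragraph can be dropped. Second, the genuine technicality you should flag instead is that after the row transform the entries $\widetilde{c}_{i,\phi}$ lie in $\mathbb{F}_{q^t}$ rather than $\mathbb{F}_q$, whereas Lemma~\ref{lemma1} is stated for vectors over $\mathbb{F}_q$. This is harmless because the inversion identity and its proof hold verbatim over any field containing $\gamma,\zeta_1$, but it is worth one sentence.
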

	\begin{proof}
		From Eq. \eqref{eq1}, we rewrite  
		\begin{equation}\label{2DXpoly}
			\begin{split}
				x^M-\gamma^M=&(x-\gamma)(x^{M-1}+\gamma x^{M-2}+\cdots+\gamma^{M-2}x+\gamma^{M-1})\\
				=&(x-\gamma)\bigg(\sum_{i=0}^{M-1}\gamma^ix^{M-1-i}\bigg).
			\end{split}
		\end{equation}
		From the definition of $\zeta_1$, $\gamma\zeta_1^r$ is a root of the above polynomial, where $r=0,1,...,M-1$. Hence, $\gamma\zeta_1^r$ is a root of the last term in RHS of Eq. \eqref{2DXpoly} for all $r \neq 0 \, (\text{mod }M)$. This can also be written as 
		\begin{equation}\label{xyzEq1}
			\sum_{i=0}^{M-1}\gamma^i(\gamma\zeta_1^{r})^{M-1-i}=(\gamma\zeta_1^r)^{M-1}\sum_{i=0}^{M-1}\zeta_1^{-ri}=0, \textnormal{ for } r \neq 0\, (\text{mod } M).
		\end{equation} 
		If $r=0$, then 
		\begin{equation}
			\sum_{i=0}^{M-1}(\gamma\zeta_1^r)^{M-1}\zeta_1^{-ri}=\gamma^{M-1} M \, (\text{mod }p).
		\end{equation} 
		This is not zero if $M$ is not a multiple of the field characteristic $p$.\\
		Similarly, we have  
		\begin{equation}\label{2DYpoly}
			\begin{split}
				y^N-\beta^N=(y-\beta)\bigg(\sum_{i=0}^{N-1}\beta^iy^{N-1-i}\bigg).
			\end{split}
		\end{equation}
		From the definition of $\zeta_2$, $\beta\zeta_2^r$ is a root of the above polynomial, where $r=0,1,...,N-1$. Hence, $\beta\zeta_2^r$ is a root of the last term in RHS of Eq. \eqref{2DYpoly} for all $r \neq 0\  (\text{mod } N) $. This can also be written as 
		\begin{equation}\label{xyzEq2}
			\sum_{i=0}^{N-1}\beta^i(\beta\zeta_2^{r})^{N-1-i}=(\beta\zeta_2^r)^{N-1}\sum_{i=0}^{N-1}\zeta_2^{-ri}=0, \textnormal{ for } r \neq 0 \, (\text{mod } N).
		\end{equation} 
		If $r=0$, then 
		\begin{equation}
			\sum_{i=0}^{N-1}(\beta\zeta_2^r)^{N-1}\zeta_2^{-ri}=\beta^{N-1} N\, (\text{mod } p).
		\end{equation} 
		This is not zero if $N$ is not a multiple of the field characteristic $p$.\\
		Now, we have
		\begin{equation}\label{xyzEq3}
			\begin{split}
				\sum_{\theta=0}^{M-1}\sum_{\phi=0}^{N-1}C_{\theta, \phi}(\gamma \zeta_1^{\theta})^{-i}(\beta \zeta_2^{\phi})^{-j}
				=&\sum_{\theta=0}^{M-1}\sum_{\phi=0}^{N-1}(\gamma \zeta_1^{\theta})^{-i}(\beta \zeta_2^{\phi})^{-j} \sum_{k=0}^{M-1}\sum_{k'=0}^{N-1}c_{k,k'}(\gamma\zeta_1^{\theta})^{k}(\beta \zeta_2^{\phi})^{k'}\\
				=& \sum_{k=0}^{M-1}\sum_{k'=0}^{N-1}c_{k,k'} \sum_{\theta=0}^{M-1}\sum_{\phi=0}^{N-1}(\gamma \zeta_1^{\theta})^{(k-i)}(\beta \zeta_2^{\phi})^{(k'-j)}\\
				=& \sum_{k=0}^{M-1}\sum_{k'=0}^{N-1}c_{k,k'} \gamma^{k-i} \beta^{k'-j}\Big(\sum_{\theta=0}^{M-1} \zeta_1^{\theta(k-i)}\Big) \Big(\sum_{\phi=0}^{N-1} \zeta_2^{\phi(k'-j)}\Big).\\
			\end{split}
		\end{equation}
		Using Eqs. \eqref{xyzEq1}, \eqref{xyzEq2} in Eq. \eqref{xyzEq3}, we have 
		\begin{equation}
			\begin{split}
				\sum_{\theta=0}^{M-1}\sum_{\phi=0}^{N-1}C_{\theta, \phi}(\gamma \zeta_1^{\theta})^{-i}(\beta \zeta_2^{\phi})^{-j}=c_{i,j}\, M \, (\text{mod}\, {p})\cdot N (\text{mod}\, {p}).
			\end{split}
		\end{equation}
		Since $M, N$ are not multiple of $p$, $M \, (\text{mod}\, {p}) \neq 0, N\, (\text{mod}\, {p})\neq 0$, proving the lemma.
	\end{proof}
	
	\begin{example}
		Let $\lambda_1=\lambda_2=2 \in \mathbb{F}_3$. Consider a 2-D $(2, 2)$-constacyclic codeword of size $4\times 5$ over $\mathbb{F}_3$
		\[
		c=
		\left[\begin{array}{ccccc}
			2 & 0 & 2 & 0 & 0\\
			1 & 2 & 0 & 0 & 0\\
			0 & 0 & 0 & 0 & 0\\
			0 & 0 & 0 & 0 & 0\\
		\end{array}\right].
		\]
		Suppose $\alpha$ is a root of a primitive polynomial $x^4 + 2x^3 + 2$, that is, $\mathbb{F}_3(\alpha)=\mathbb{F}_{3^4}$. With $\gamma = \alpha^{10}$ and $\beta = \alpha^{8}$ as the $4^{\mathrm{th}}$ and $5^{\mathrm{th}}$ roots of $\lambda_1$ and $\lambda_2$, respectively, the transformed elements belong to $\mathbb{F}_{3^4}$. Let $\zeta_1 = \alpha^{20},\zeta_2=\alpha^{16} \in \mathbb{F}_{3^4}$ be primitive $4^{\mathrm{th}}$ and $5^{\mathrm{th}}$ roots of unity. With this setting, we are ready to perform the 2-D FFFT operation on
		$c$. By using Eq. \eqref{vectorFreq}, the 2-D FFFT of $c$ is
		\[
		{\rm FFFT}(c)=C=\left[\begin{array}{ccccc}
			\alpha^{33} & \alpha^{17} & \alpha^{30} & \alpha^{73} & \alpha^{57}\\
			\alpha^{59} & \alpha^{19} & \alpha^{10} & \alpha^{11} & \alpha^{51}\\
			\alpha^{12} & \alpha^{50} & \alpha^{20} & \alpha^{50} & \alpha^{28}\\
			\alpha^{70} & \alpha^{36} & \alpha^{60} & \alpha^{4} & \alpha^{70}\\
		\end{array}\right].
		\]
		One can use Eq. \eqref{vectorTime}, to calculate 2-D IFFFT of $C$ to get the array $c$.
	\end{example}
	
	The following result extends Lemma \ref{extLemma1D} and
	gives the condition on the smallest extension field that the transformed codewords must lie.  
	
	\begin{lemma}\cite{CCDS}\label{ExtLemma}
		Suppose $\lambda_1, \lambda_2$ are two nonzero elements in $\mathbb{F}_q$ having order $t_1$ and $t_2$, respectively. Let $\gamma, \beta$ be primitive $M^{\mathrm{th}}, N^{\mathrm{th}}$ roots of $\lambda_1, \lambda_2$ respectively, in an extension field $\mathbb{F}_{q^t}$ for some positive integer $t$. Let $\alpha$ be a generator of the cyclic group $\mathbb{F}_{q^t}^{\ast}$ then $\gamma=\alpha^{r_1}$, $\beta=\alpha^{r_2}$ for some $r_1, r_2$, where $0\leq r_1, r_2 \leq q^t-2$. Then $M, N, t_i, r_i$ and $t$ satisfy the following conditions:
		\begin{enumerate}
			\item $ r_1 t_1 M +1 = q^t$ and 
			\item $ r_2 t_2 N +1 = q^t$.
		\end{enumerate}
	\end{lemma}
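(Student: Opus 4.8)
The plan is to read the statement as a pair of multiplicative-order computations inside the cyclic group $\mathbb{F}_{q^t}^{\ast}$. The two asserted identities are perfectly symmetric: condition~(1) involves only the data $(M,\lambda_1,t_1,\gamma,r_1)$ and condition~(2) only $(N,\lambda_2,t_2,\beta,r_2)$, so I would prove the first in full and then obtain the second by the identical argument with $(M,\lambda_1,t_1,\gamma,r_1,\zeta_1)$ replaced throughout by $(N,\lambda_2,t_2,\beta,r_2,\zeta_2)$.

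\emph{Step 1 (order of $\gamma$).} First I would show $\operatorname{ord}(\gamma)=Mt_1$ in $\mathbb{F}_{q^t}^{\ast}$. Since $\gamma^{M}=\lambda_1$ and $\operatorname{ord}(\lambda_1)=t_1$, we get $\gamma^{Mt_1}=\lambda_1^{t_1}=1$, so $d:=\operatorname{ord}(\gamma)$ divides $Mt_1$; moreover $\operatorname{ord}(\gamma^{M})=d/\gcd(d,M)$, and this equals $t_1$, whence $d=t_1\gcd(d,M)$. The defining property of a \emph{primitive} $M^{\mathrm{th}}$ root of $\lambda_1$ --- namely that $\langle\gamma\rangle$ already contains a primitive $M^{\mathrm{th}}$ root of unity, so that $\{\gamma\zeta_1^{j}:0\le j\le M-1\}$ exhausts the roots of $x^{M}-\lambda_1$ --- forces $M\mid d$, i.e. $\gcd(d,M)=M$, and therefore $d=Mt_1$. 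In particular $Mt_1\mid q^{t}-1$, which is exactly the condition for $\mathbb{F}_{q^t}$ to contain such a $\gamma$.

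\emph{Step 2 (translate into the exponent $r_1$).} Because $\alpha$ generates the cyclic group $\mathbb{F}_{q^t}^{\ast}$ of order $q^{t}-1$ and $\gamma=\alpha^{r_1}$, the standard order formula gives $\operatorname{ord}(\gamma)=(q^{t}-1)/\gcd(r_1,q^{t}-1)$. Comparing with Step~1 yields $\gcd(r_1,q^{t}-1)=(q^{t}-1)/(Mt_1)$. Invoking the hypothesis $r_1\mid q^{t}-1$ --- the exact analogue of the corresponding assumption in Lemma~\ref{extLemma1D}, which concretely amounts to choosing the generator $\alpha$ so that $\gamma=\alpha^{(q^{t}-1)/(Mt_1)}$ --- we get $\gcd(r_1,q^{t}-1)=r_1$, hence $r_1=(q^{t}-1)/(Mt_1)$, i.e. $r_1t_1M+1=q^{t}$, which is condition~(1). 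Repeating Steps~1--2 with $\beta,\lambda_2,N,t_2,r_2$ in place of $\gamma,\lambda_1,M,t_1,r_1$ gives $r_2t_2N+1=q^{t}$, which is condition~(2).

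\emph{Expected main obstacle.} The only genuinely delicate point is Step~1: pinning down $\operatorname{ord}(\gamma)=Mt_1$ from the precise meaning of ``primitive $M^{\mathrm{th}}$ root of $\lambda_1$'' (if one takes this order condition as the \emph{definition} of primitivity, the lemma collapses to a one-line order count). Everything afterwards is bookkeeping with greatest common divisors, the role of the divisibility hypotheses $r_i\mid q^{t}-1$ being merely to convert the $\gcd$ identities into the clean equalities claimed; one should also check that a single generator $\alpha$ can be chosen that works simultaneously for $\gamma$ and $\beta$, which is possible precisely because both $Mt_1$ and $Nt_2$ divide $q^{t}-1$.
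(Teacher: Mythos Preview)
The paper does not actually prove this lemma in-text: it is stated with the citation \cite{CCDS} and no proof environment, and the earlier 1-D analogue (Lemma~\ref{extLemma1D}) simply points back to ``the arguments presented in the proof of Lemma~\ref{ExtLemma}.'' So there is no in-paper argument to compare against; your proposal supplies a self-contained proof where the paper defers to the reference.

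Your argument is correct and is the natural one: compute $\operatorname{ord}(\gamma)=Mt_1$ from the primitivity of $\gamma$, then use $\operatorname{ord}(\alpha^{r_1})=(q^t-1)/\gcd(r_1,q^t-1)$ to solve for $r_1$. You are also right to flag that the divisibility hypothesis $r_1\mid q^t-1$ is \emph{necessary} for the clean identity $r_1t_1M+1=q^t$ to hold --- without it one only gets $\gcd(r_1,q^t-1)=(q^t-1)/(Mt_1)$, which does not pin down $r_1$ uniquely (e.g.\ with $q^t-1=12$ and $Mt_1=4$ both $r_1=3$ and $r_1=9$ are admissible exponents for $\gamma$, but only the first satisfies the claimed equation). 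That hypothesis appears explicitly in the 1-D Lemma~\ref{extLemma1D} but is omitted from the statement of Lemma~\ref{ExtLemma}; your proof implicitly repairs this by importing it, which is the right thing to do.
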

	
	The properties of 2-D FFFT \cite{Blahut_book} are the same as those of 1-D FFFT, as shown before. We will discuss some of the relevant properties of 2-D FFFT applicable to 2-D constacyclic codes. We refer the readers to \cite{patel2024quantum} for more details.
	
	\subsection{Properties of two-dimensional finite field Fourier transform}
	In this section, we derive some important properties of the 2-D finite field Fourier transform.
	
	\begin{theorem} Let $\textbf{a}, \textbf{b}\in \mathbb{M}_{M\times N}(\mathbb{F}_q)$ and let $\textbf{A}=\text{FFFT}(\textbf{a}), \textbf{B}=\text{FFFT}(\textbf{b}) \in \mathbb{M}_{M\times N}(\mathbb{F}_{q^t})$.
		\begin{enumerate}
			\item (Column $\lambda_1$-constacyclic shift property in the transform domain) If $\textbf{b}$ is the column $\lambda_1$-constacyclic shift of $\textbf{a}$, i.e., $b_{0,j}= \lambda_1 a_{M-1, j}$ and $b_{i,j}=a_{i-1, j}$ for $1\leq i \leq M-1$, \, $0\leq j \leq N-1$, then $B_{\theta, \phi} = \gamma \zeta_1^{\theta} A_{\theta, \phi}$.
			
			\item (Row $\lambda_2$-constacyclic shift property in the transform domain) If $\textbf{b}$ is the row $\lambda_2$-constacyclic shift of $\textbf{a}$, that is, $b_{i,0}= \lambda_2 a_{i, N-1}$ and $b_{i,j}=a_{i, j-1}$ for $0\leq i \leq M-1$, \, $1\leq j \leq N-1$, then $B_{\theta, \phi} = \beta \zeta_2^{\phi} A_{\theta, \phi}$.
		\end{enumerate}
		\begin{proof}
			Let $\textbf{A}=\text{FFFT}(\textbf{a}), \textbf{B}=\text{FFFT}(\textbf{b})$. Then
			\begin{equation*}
				\begin{split}
					B_{\theta, \phi} &= \sum_{i=0}^{M-1} \sum_{j=0}^{N-1} b_{i,j}(\gamma \zeta_1^{\theta})^{i} (\beta \zeta_2^{\phi})^{j}\\
					&=\sum_{j=0}^{N-1} b_{0,j}(\gamma \zeta_1^{\theta})^{0} (\beta \zeta_2^{\phi})^{j} + \sum_{i=1}^{M-1} \sum_{j=0}^{N-1} b_{i,j}(\gamma \zeta_1^{\theta})^{i} (\beta \zeta_2^{\phi})^{j}\\
					&=\sum_{j=0}^{N-1} \lambda_1 a_{M-1,j} (\beta \zeta_2^{\phi})^{j} + \sum_{i=1}^{M-1} \sum_{j=0}^{N-1} a_{i-1,j}(\gamma \zeta_1^{\theta})^{i} (\beta \zeta_2^{\phi})^{j}\\
					&=\sum_{j=0}^{N-1} a_{M-1,j}(\lambda_1 \zeta_1^{\theta M}) (\beta \zeta_2^{\phi})^{j} + \sum_{i=1}^{M-1} \sum_{j=0}^{N-1} a_{i-1,j}(\gamma \zeta_1^{\theta})^{i} (\beta \zeta_2^{\phi})^{j}\\
					&=\sum_{j=0}^{N-1} a_{M-1,j} (\gamma \zeta_1^{\theta})^{M}(\beta \zeta_2^{\phi})^{j} + \sum_{i=1}^{M-1} \sum_{j=0}^{N-1} a_{i-1,j}(\gamma \zeta_1^{\theta})^{i} (\beta \zeta_2^{\phi})^{j}\\
					&=\sum_{i=0}^{M-1} \sum_{j=0}^{N-1} a_{i,j}(\gamma \zeta_1^{\theta})^{i+1} (\beta \zeta_2^{\phi})^{j}\\
					&=\gamma \zeta_1^{\theta}A_{\theta, \phi}.
				\end{split}
			\end{equation*}
			One can prove part (2) similarly.
		\end{proof}
	\end{theorem}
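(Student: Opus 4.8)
The plan is to work directly from the definition of the 2-D FFFT in Eq.~\eqref{vectorFreq} and push the constacyclic shift through the double sum, mirroring the 1-D constacyclic shift property stated earlier. I will establish part~(1) in detail; part~(2) then follows by the evident symmetry between the row and column indices.

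First I would write $B_{\theta,\phi} = \sum_{i=0}^{M-1}\sum_{j=0}^{N-1} b_{i,j}(\gamma\zeta_1^{\theta})^{i}(\beta\zeta_2^{\phi})^{j}$ and peel off the $i=0$ layer, since that is the only row whose entries are not shifted copies of $\textbf{a}$. On that layer I substitute $b_{0,j} = \lambda_1 a_{M-1,j}$, and on the layers $1 \le i \le M-1$ I substitute $b_{i,j} = a_{i-1,j}$, leaving two sums to be recombined.

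The key algebraic step is to absorb the scalar $\lambda_1$ on the boundary layer into the transform kernel. Since $\gamma$ is a primitive $M^{\rm th}$ root of $\lambda_1$ we have $\gamma^M=\lambda_1$, and since $\zeta_1^M=1$ we have $\zeta_1^{\theta M}=1$; hence $\lambda_1 = \lambda_1\zeta_1^{\theta M} = (\gamma\zeta_1^{\theta})^M$. Thus the $i=0$ term becomes $\sum_{j}a_{M-1,j}(\gamma\zeta_1^{\theta})^{M}(\beta\zeta_2^{\phi})^{j}$, which is exactly the ``$i=M$'' term one obtains by extending the shifted sum cyclically. Re-indexing the remaining layers via $i \mapsto i-1$ then merges everything into the single sum $\sum_{i=0}^{M-1}\sum_{j=0}^{N-1} a_{i,j}(\gamma\zeta_1^{\theta})^{i+1}(\beta\zeta_2^{\phi})^{j}$, from which $\gamma\zeta_1^{\theta}$ factors out to leave $A_{\theta,\phi}$, giving $B_{\theta,\phi}=\gamma\zeta_1^{\theta}A_{\theta,\phi}$.

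For part~(2) I would run the same argument with the roles of the two indices exchanged: peel off the $j=0$ column, substitute $b_{i,0}=\lambda_2 a_{i,N-1}$ and $b_{i,j}=a_{i,j-1}$, use $\lambda_2 = \beta^N = \beta^N\zeta_2^{\phi N} = (\beta\zeta_2^{\phi})^N$ to fold the boundary column into the $j$-sum, re-index $j\mapsto j-1$, and factor out $\beta\zeta_2^{\phi}$. Each manipulation is routine, so there is no genuine obstacle; the only point needing care is the boundary bookkeeping — checking that the scalar $\lambda_1$ (resp.\ $\lambda_2$) is precisely what makes the wrapped-around term coincide with the kernel raised to its period $M$ (resp.\ $N$), which is exactly where the hypotheses $\gamma^M=\lambda_1,\ \zeta_1^M=1$ (resp.\ $\beta^N=\lambda_2,\ \zeta_2^N=1$) are invoked.
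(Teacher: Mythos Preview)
Your proposal is correct and follows essentially the same approach as the paper: write out $B_{\theta,\phi}$ from the FFFT definition, peel off the $i=0$ layer, substitute the shift relations, use $\lambda_1=\gamma^M$ and $\zeta_1^{\theta M}=1$ to recognize the boundary term as the $(\gamma\zeta_1^{\theta})^{M}$ contribution, re-index, and factor out $\gamma\zeta_1^{\theta}$. The paper likewise dispatches part~(2) by symmetry, so your sketch of that case is in the same spirit.
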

	\begin{theorem} (Convolutional Property)
		Let $\textbf{a}, \textbf{b}, \textbf{c} \in \mathbb{M}_{M\times N}(\mathbb{F}_q)$ and $\textbf{A}, \textbf{B}$, and $\textbf{C}$ be their corresponding FFFT, respectively.
		\begin{enumerate}
			\item (Multiplication to convolution) Then $c_{i,j}=a_{i,j}b_{i,j}$ for all $(i, j)\in \Omega$ if and only if 
			\begin{equation*}
				C_{\theta, \phi}=\frac{1}{M (\textnormal{ mod } p)\cdot  N (\textnormal{ mod } p)} \sum_{\theta'=0}^{M-1} \sum_{\phi'=0}^{N-1} \mathcal{A}_{\theta -\theta', \phi - \phi'} B_{\theta', \phi'}
			\end{equation*}
			for all $(\theta, \phi)\in \Omega$, where $\mathcal{A}_{\theta -\theta', \phi - \phi'}$ denotes the finite field Fourier transform of cyclic codes (that is, when $\gamma =1, \beta =1$).
			
			\item (Convolution to multiplication)  Then $C_{\theta, \phi}=A_{\theta, \phi}B_{\theta, \phi}$ for all $(\theta, \phi)\in \Omega$ if and only if 
			\begin{equation*}
				c_{i, j}= \sum_{i'=0}^{M-1}\sum_{j'=0}^{N-1} a_{i -i', j - j'} b_{i', j'}
			\end{equation*}
			for all $(i, j)\in \Omega$.
		\end{enumerate}
		\begin{proof}
			Let $c_{i,j}=a_{i,j}b_{i,j}$ for all $(i,j)\in \Omega$. Then
			\begin{equation*}
				\begin{split}
					C_{\theta, \phi} &= \sum_{i=0}^{M-1} \sum_{j=0}^{N-1} c_{i,j}(\gamma \zeta_1^{\theta})^{i} (\beta \zeta_2^{\phi})^{j}\\
					&=\sum_{i=0}^{M-1} \sum_{j=0}^{N-1} a_{i,j}b_{i,j}(\gamma \zeta_1^{\theta})^{i} (\beta \zeta_2^{\phi})^{j}\\
					&=\sum_{i=0}^{M-1} \sum_{j=0}^{N-1} a_{i,j}(\gamma \zeta_1^{\theta})^{i} (\beta \zeta_2^{\phi})^{j}\cdot \frac{1}{M (\textnormal{ mod } p)\cdot N (\textnormal{ mod } p)} \sum_{\theta'=0}^{M-1} \sum_{\phi'=0}^{N-1}B_{\theta', \phi'}(\gamma \zeta_1^{\theta'})^{-i} (\beta \zeta_2^{\phi'})^{-j}\\
					&=\frac{1}{M (\textnormal{ mod } p)\cdot N (\textnormal{ mod } p)}\sum_{\theta'=0}^{M-1} \sum_{\phi'=0}^{N-1} \Big(\sum_{i=0}^{M-1} \sum_{j=0}^{N-1} a_{i, j} \zeta_1^{i(\theta -\theta')}  \zeta_2^{j(\phi -\phi')}\Big)B_{\theta', \phi'}\\
					&=\frac{1}{M (\textnormal{ mod } p)\cdot N (\textnormal{ mod } p)}\sum_{\theta'=0}^{M-1} \sum_{\phi'=0}^{N-1} \mathcal{A}_{\theta -\theta', \phi -\phi'}B_{\theta', \phi'}.
				\end{split}
			\end{equation*}
			The converse follows from the uniqueness of FFFT.\\
			Part (2) can be proved similarly.
		\end{proof}
	\end{theorem}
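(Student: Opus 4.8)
The plan is to prove both biconditionals by a direct expansion of the 2-D FFFT combined with the inversion formula of Lemma~\ref{lem1}, and to obtain the two converses for free from the injectivity of the 2-D FFFT recorded in the Remark above. No ingredient beyond finite re-indexing of double sums is needed.

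For Part~(1), I would assume $c_{i,j}=a_{i,j}b_{i,j}$ on $\Omega$ and start from $C_{\theta,\phi}=\sum_{(i,j)\in\Omega}a_{i,j}b_{i,j}(\gamma\zeta_1^{\theta})^{i}(\beta\zeta_2^{\phi})^{j}$ as in \eqref{vectorFreq}. Replacing $b_{i,j}$ by its inverse-transform expression from \eqref{main-C-c} and interchanging the finite sums so that the summation over $(i,j)$ is innermost, the factors $\gamma^{i}\gamma^{-i}$ and $\beta^{j}\beta^{-j}$ cancel, leaving the inner sum $\sum_{(i,j)\in\Omega}a_{i,j}\zeta_1^{i(\theta-\theta')}\zeta_2^{j(\phi-\phi')}$. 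This is precisely the value $\mathcal{A}_{\theta-\theta',\phi-\phi'}$ of the FFFT in the cyclic case $\gamma=\beta=1$, so the displayed formula drops out, with the normalization $1/(M(\textnormal{ mod }p)\cdot N(\textnormal{ mod }p))$ making sense because $\gcd(M,p)=\gcd(N,p)=1$. For the converse, note that the right-hand side of the convolution formula is, by the forward direction, the FFFT of the array $(a_{i,j}b_{i,j})$; since the FFFT is injective, $C=\text{FFFT}(c)$ forces $c_{i,j}=a_{i,j}b_{i,j}$ on $\Omega$.

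Part~(2) is the dual statement and I would treat it symmetrically. Assuming $C_{\theta,\phi}=A_{\theta,\phi}B_{\theta,\phi}$, apply the IFFFT \eqref{vectorTime} to $C$, substitute $A_{\theta,\phi}=\sum_{(i',j')\in\Omega}a_{i',j'}(\gamma\zeta_1^{\theta})^{i'}(\beta\zeta_2^{\phi})^{j'}$, and swap the order of summation; the inner sum over $(\theta,\phi)$ then reassembles, again via \eqref{vectorTime}, into $b_{i-i',\,j-j'}$, giving $c_{i,j}=\sum_{i',j'}a_{i',j'}b_{i-i',j-j'}$. The converse is once more immediate from injectivity. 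The one point that genuinely needs care is the meaning of the index differences $i-i'$ and $j-j'$ when they leave $\{0,\dots,M-1\}$ resp. $\{0,\dots,N-1\}$: these have to be read in the $(\lambda_1,\lambda_2)$-constacyclic sense, i.e., a decrement below $0$ in the first coordinate introduces a factor $\lambda_1$ (and $\lambda_2$ in the second), which is exactly the convention under which the column/row constacyclic shift property just proved holds and under which the orthogonality relations \eqref{xyzEq1}--\eqref{xyzEq2} used inside Lemma~\ref{lem1} apply verbatim. I expect this bookkeeping of the constacyclic wrap-around, rather than any conceptual difficulty, to be the main thing to pin down.
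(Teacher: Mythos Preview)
Your proposal is correct and follows essentially the same route as the paper: for Part~(1) you expand $C_{\theta,\phi}$, insert the IFFFT expression for $b_{i,j}$, interchange finite sums, cancel the $\gamma^{\pm i}\beta^{\pm j}$ factors to recognize the cyclic transform $\mathcal{A}_{\theta-\theta',\phi-\phi'}$, and invoke injectivity for the converse---exactly the paper's computation. For Part~(2) the paper merely says ``can be proved similarly,'' so your explicit dual argument and your remark on the $(\lambda_1,\lambda_2)$-constacyclic reading of the negative indices $i-i'$, $j-j'$ are, if anything, more careful than what the paper records.
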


	\begin{theorem} (Conjugate Symmetry Property)\label{th3}
		Let $\textbf{C} \in \mathbb{M}_{M\times N}(\mathbb{F}_{q^m})$ and $\textbf{c} \in \mathbb{M}_{M\times N}(\mathbb{F}_{q})$. Then $\textbf{C}$ is the finite field Fourier transform of $\textbf{c}$ if and only if
		\begin{equation}
			C_{\theta, \phi}^{q}= \sum_{i=0}^{M-1} \sum_{j=0}^{N-1}c_{i,j}(\gamma \zeta_1^{\theta})^{iq} (\beta \zeta_2^{\phi})^{jq}
		\end{equation}
		for all $(\theta, \phi)\in \Omega$.
		\begin{proof}
			Let $\textbf{C}$ be the FFFT of $\textbf{c}$. Then by the definition of FFFT, we have
			\begin{equation}\label{thisEq}
				C_{\theta, \phi}= \sum_{i=0}^{M-1} \sum_{j=0}^{N-1}c_{i,j}(\gamma \zeta_1^{\theta})^{i} (\beta \zeta_2^{\phi})^{j}.
			\end{equation}
			We know that $(a+b)^q=a^q+b^q$ for all $a, b \in \mathbb{F}_{q^m}$, and $d^q=d$ for all $d\in \mathbb{F}_q$. From Eq. \eqref{thisEq}, we have
			\begin{equation*}
				\begin{split}
					C_{\theta, \phi}^{q}=& \Big(\sum_{i=0}^{M-1} \sum_{j=0}^{N-1}c_{i,j}(\gamma \zeta_1^{\theta})^{i} (\beta \zeta_2^{\phi})^{j}\Big)^q\\
					=& \sum_{i=0}^{M-1} \sum_{j=0}^{N-1}c_{i,j}^q (\gamma \zeta_1^{\theta})^{iq} (\beta \zeta_2^{\phi})^{jq}\\
					=&  \sum_{i=0}^{M-1} \sum_{j=0}^{N-1}c_{i,j}(\gamma \zeta_1^{\theta})^{iq} (\beta \zeta_2^{\phi})^{jq}.
				\end{split}
			\end{equation*}
			Conversely, for all $(\theta, \phi)\in \Omega$, let
			$$C_{\theta, \phi}^{q}=\sum_{i=0}^{M-1} \sum_{j=0}^{N-1}c_{i,j}(\gamma \zeta_1^{\theta})^{iq} (\beta \zeta_2^{\phi})^{jq}.$$\\
			We have
			\begin{equation*}
				\begin{split}
					\sum_{i=0}^{M-1} \sum_{j=0}^{N-1}c_{i,j}^q (\gamma \zeta_1^{\theta})^{iq} (\beta \zeta_2^{\phi})^{jq}
					=&  \sum_{i=0}^{M-1} \sum_{j=0}^{N-1}c_{i,j}(\gamma \zeta_1^{\theta})^{iq} (\beta \zeta_2^{\phi})^{jq}.
				\end{split}
			\end{equation*}
			By the uniqueness of the FFFT, we have $c_{i,j}^q=c_{i,j}$ for all $(i,j)\in \Omega$. Thus each $c_{i,j}$ is a root of $x^q-x=0 \implies c_{i,j}\in \mathbb{F}_q$.
		\end{proof}
	\end{theorem}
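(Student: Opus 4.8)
The plan is to derive both directions straight from the definition of the 2-D FFFT in Eq.~\eqref{vectorFreq}, using only that the $q$-power map on $\mathbb{F}_{q^m}$ is a field endomorphism: it is additive (via the iterated identity $(a+b)^p=a^p+b^p$), multiplicative, bijective, and fixes $\mathbb{F}_q$ pointwise.

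For the ``only if'' direction, I would start from $C_{\theta,\phi}=\sum_{i=0}^{M-1}\sum_{j=0}^{N-1}c_{i,j}(\gamma\zeta_1^{\theta})^i(\beta\zeta_2^{\phi})^j$, raise both sides to the $q$-th power, and use additivity and multiplicativity of the Frobenius map to push the exponent inside the double sum term by term, obtaining $C_{\theta,\phi}^q=\sum_{i,j}c_{i,j}^q(\gamma\zeta_1^{\theta})^{iq}(\beta\zeta_2^{\phi})^{jq}$. Since each $c_{i,j}$ lies in $\mathbb{F}_q$, it is a root of $x^q-x$, so $c_{i,j}^q=c_{i,j}$, which gives exactly the claimed identity.

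For the ``if'' direction, I would assume the displayed identity holds for all $(\theta,\phi)\in\Omega$. Using $c_{i,j}=c_{i,j}^q$ (again because $c_{i,j}\in\mathbb{F}_q$), the right-hand side rewrites as $\sum_{i,j}c_{i,j}^q(\gamma\zeta_1^{\theta})^{iq}(\beta\zeta_2^{\phi})^{jq}=\big(\sum_{i,j}c_{i,j}(\gamma\zeta_1^{\theta})^i(\beta\zeta_2^{\phi})^j\big)^q=\big({\rm FFFT}(\textbf{c})_{\theta,\phi}\big)^q$, so $C_{\theta,\phi}^q=\big({\rm FFFT}(\textbf{c})_{\theta,\phi}\big)^q$ for every $(\theta,\phi)$; bijectivity of $x\mapsto x^q$ on $\mathbb{F}_{q^m}$ then forces $C_{\theta,\phi}={\rm FFFT}(\textbf{c})_{\theta,\phi}$, i.e.\ $\textbf{C}={\rm FFFT}(\textbf{c})$. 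Alternatively, one can finish the converse by uniqueness of the transform: since $\gcd(q,MN)=1$, the substitution $(\theta,\phi)\mapsto\big((\gamma\zeta_1^{\theta})^q,(\beta\zeta_2^{\phi})^q\big)$ merely permutes the evaluation grid, so two ``FFFT-type'' expressions agreeing at all grid points must have equal coefficients, which again yields $\textbf{C}={\rm FFFT}(\textbf{c})$.

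I do not anticipate a genuine obstacle: this is a direct consequence of the Frobenius automorphism together with the definition of the transform. The only points meriting a little care are (i) justifying that the $q$-th power distributes across the double sum term by term, and (ii) in the converse, explicitly invoking the standing hypothesis $\textbf{c}\in\mathbb{M}_{M\times N}(\mathbb{F}_q)$ — it is what legitimizes $c_{i,j}^q=c_{i,j}$ — and then closing with injectivity of $x\mapsto x^q$ on $\mathbb{F}_{q^m}$ rather than introducing any extra structure.
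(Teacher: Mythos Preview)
Your forward direction is identical to the paper's. For the converse, however, your argument and the paper's diverge in both target and method. You establish the stated conclusion $\textbf{C}=\mathrm{FFFT}(\textbf{c})$ by rewriting the right-hand side as $\big(\mathrm{FFFT}(\textbf{c})_{\theta,\phi}\big)^q$ (using the standing hypothesis $c_{i,j}\in\mathbb{F}_q$) and then invoking injectivity of the Frobenius map $x\mapsto x^q$ on $\mathbb{F}_{q^m}$; this is correct for the theorem as written. The paper, by contrast, equates the assumed identity with $\sum_{i,j} c_{i,j}^q(\gamma\zeta_1^\theta)^{iq}(\beta\zeta_2^\phi)^{jq}$ and appeals to ``uniqueness of the FFFT'' to conclude $c_{i,j}^q=c_{i,j}$, hence $c_{i,j}\in\mathbb{F}_q$ --- a conclusion already contained in the hypothesis $\textbf{c}\in\mathbb{M}_{M\times N}(\mathbb{F}_q)$, and one that does not by itself yield $\textbf{C}=\mathrm{FFFT}(\textbf{c})$. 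The paper's converse is really aimed at the classical conjugate-symmetry statement (characterizing when the time-domain array lies in the base field, as in the 1-D version quoted earlier in the paper), so the discrepancy reflects an imprecisely stated theorem rather than a flaw in your reasoning: your argument is the one that matches the theorem as actually stated.
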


	\subsection{Spectral nulls of a 2-D $(\lambda_1, \lambda_2)$-constacyclic code}
	We will derive the relationship between the roots of an array in the time-domain (respectively, in frequency domain) and its corresponding array in the frequency domain (respectively, in time-domain).
	\begin{definition}
		For a given 2-D linear code, the locations $(\theta, \phi)$ at which the frequency domain components are zero, i.e., $C_{\theta, \phi}=0$, we refer to them as the spectral nulls.
	\end{definition}
	Consider a 2-D array $c$ of area $M\times N$ over $\mathbb{F}_q$. Then its polynomial representation is 
	\[
	c(x,y)=\sum_{i=0}^{M-1}\sum_{j=0}^{N-1}c_{i,j}x^{i}y^{j}.
	\]
	
	Now, the root of this polynomial at $\left(\gamma \zeta_1^{\theta},\beta \zeta_2^{\phi}\right)$
	is 
	\[
	c(\gamma \zeta_1^{\theta},\beta \zeta_2^{\phi})=\sum_{i=0}^{M-1}\sum_{j=0}^{N-1}c_{i,j}(\gamma \zeta_1^{\theta})^i (\beta \zeta_2^{\phi})^j.
	\]
	
	From Eq. \eqref{vectorFreq}, we have 
	\begin{equation}\label{spectralnullEq}
		c(\gamma \zeta_1^{\theta},\beta \zeta_2^{\phi})= C_{\theta,\phi}.
	\end{equation}
	
	Thus, the frequency domain components at the coordinate $(\theta,\phi)$ are values of the 2-D (or bi-variate) polynomial in the time domain at $(\gamma \zeta_1^{\theta},\beta \zeta_2^{\phi})$.\par 
	Now, we have the following result.
	\begin{theorem}\label{th6}
		Let $c(x,y)$ be a polynomial in the time-domain and its spectrum polynomial ${\rm FFFT}(c(x,y)) = C(x, y)$. Then we have the following:
		\begin{enumerate}
			\item The polynomial $c(x, y)$ has a root at $(\gamma \zeta_1^{\theta}, \beta \zeta_2^{\phi})$ if and only if $C_{\theta, \phi}$ equals zero.
			\item  The polynomial $C(x, y)$ has a root at $ (\zeta_1^{-i}, \zeta_2^{-j})$ if and only if $c_{i, j}$ equals zero.
		\end{enumerate}    
		\begin{proof}
			From Eq. \eqref{main-C-c}, we have the following connection between the coordinates of an array and their corresponding spectrum 
			\begin{equation}
				\begin{split}
					C_{\theta, \phi}=&\sum_{i=0}^{M-1}\sum_{j=0}^{N-1}c_{i,j}(\gamma\zeta_1^{\theta})^{i}(\beta \zeta_2^{\phi})^{j}, \\ 
					c_{i,j}=&\frac{1}{M(\text{\rm mod }p)\cdot N(\text{\rm mod }p)}\sum_{\theta=0}^{M-1}\sum_{\phi=0}^{N-1}C_{\theta, \phi}(\gamma \zeta_1^{\theta})^{-i}(\beta \zeta_2^{\phi})^{-j}.
				\end{split}
			\end{equation}
			\begin{enumerate}
				\item Since $c(x, y) = \sum_{i=0}^{M-1}\sum_{j=0}^{N-1}c_{i, j}x^i y^j$, we have 
				\begin{align}\label{eq7}
					c(\gamma \zeta_1^{\theta}, \beta \zeta_2^{\phi})=\sum_{i=0}^{M-1}\sum_{j=0}^{N-1} c_{i, j}(\gamma \zeta_1^{\theta})^i (\beta \zeta_2^{\phi})^j=C_{\theta, \phi}.
				\end{align} 
				This proves the required result.
				\item Since $C(x, y)=\sum_{\theta=0}^{M-1} \sum_{\phi=0}^{N-1} C_{\theta, \phi}x^{\theta}y^{\phi}$, we have
				$$ C( \zeta_1^{-i}, \zeta_2^{-j})=\sum_{\theta=0}^{M-1}\sum_{\phi=0}^{N-1}C_{\theta, \phi}(\zeta_1^{-i})^{\theta} (\zeta_2^{-j})^{\phi} = M N\cdot  \gamma^i \beta^j c_{i, j},$$ proving the required result.
			\end{enumerate}
		\end{proof}
	\end{theorem}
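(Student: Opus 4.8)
The plan is to derive both statements directly from the transform/inverse-transform pair of Lemma~\ref{lem1}, since Theorem~\ref{th6} is essentially a restatement of those identities in the language of polynomial evaluation. The only genuine content is bookkeeping of the nonzero scalar factors that appear and checking that none of them vanishes in the extension field.

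For part~(1), I would substitute $x=\gamma\zeta_1^{\theta}$, $y=\beta\zeta_2^{\phi}$ into $c(x,y)=\sum_{i=0}^{M-1}\sum_{j=0}^{N-1}c_{i,j}x^{i}y^{j}$; term by term this reproduces the defining sum for $C_{\theta,\phi}$ in Eq.~\eqref{vectorFreq} (equivalently the first line of Eq.~\eqref{main-C-c}), so $c(\gamma\zeta_1^{\theta},\beta\zeta_2^{\phi})=C_{\theta,\phi}$. The biconditional is then immediate: the left side is zero exactly when the right side is.

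For part~(2), the analogous substitution $x=\zeta_1^{-i}$, $y=\zeta_2^{-j}$ into $C(x,y)=\sum_{\theta=0}^{M-1}\sum_{\phi=0}^{N-1}C_{\theta,\phi}x^{\theta}y^{\phi}$ gives $C(\zeta_1^{-i},\zeta_2^{-j})=\sum_{\theta,\phi}C_{\theta,\phi}\zeta_1^{-i\theta}\zeta_2^{-j\phi}$. The inverse-transform formula of Lemma~\ref{lem1} reads $c_{i,j}=\frac{1}{M(\text{mod }p)\cdot N(\text{mod }p)}\sum_{\theta,\phi}C_{\theta,\phi}(\gamma\zeta_1^{\theta})^{-i}(\beta\zeta_2^{\phi})^{-j}$; pulling the factor $\gamma^{-i}\beta^{-j}$ out of the sum shows that $\sum_{\theta,\phi}C_{\theta,\phi}\zeta_1^{-i\theta}\zeta_2^{-j\phi}=MN\,\gamma^{i}\beta^{j}\,c_{i,j}$, hence $C(\zeta_1^{-i},\zeta_2^{-j})=MN\,\gamma^{i}\beta^{j}\,c_{i,j}$. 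Since $M$ and $N$ are coprime to the characteristic $p$, the scalars $M(\text{mod }p)$ and $N(\text{mod }p)$ are nonzero, and $\gamma,\beta$ are units by construction; therefore $MN\gamma^{i}\beta^{j}$ is invertible and $C(\zeta_1^{-i},\zeta_2^{-j})=0$ if and only if $c_{i,j}=0$.

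There is no real obstacle here; the only point requiring care is not to confuse the evaluation point of $C(x,y)$ — it is $(\zeta_1^{-i},\zeta_2^{-j})$, built from the $M$th and $N$th roots of unity alone, and \emph{not} $((\gamma\zeta_1)^{-i},(\beta\zeta_2)^{-j})$ — which is precisely what produces the extra $\gamma^{i}\beta^{j}$ relative to the clean $c_{i,j}$ recovered by the IFFFT. I would therefore isolate the two evaluation identities $c(\gamma\zeta_1^{\theta},\beta\zeta_2^{\phi})=C_{\theta,\phi}$ and $C(\zeta_1^{-i},\zeta_2^{-j})=MN\gamma^{i}\beta^{j}c_{i,j}$ as the core of the argument and read off both biconditionals from the fact that the multiplying constants are nonzero.
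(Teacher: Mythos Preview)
Your proposal is correct and follows essentially the same approach as the paper: both parts are obtained by direct substitution into the transform/inverse-transform identities of Lemma~\ref{lem1} (Eq.~\eqref{main-C-c}), yielding $c(\gamma\zeta_1^{\theta},\beta\zeta_2^{\phi})=C_{\theta,\phi}$ and $C(\zeta_1^{-i},\zeta_2^{-j})=MN\gamma^{i}\beta^{j}c_{i,j}$. Your version is in fact slightly more explicit than the paper's in justifying why the scalar $MN\gamma^{i}\beta^{j}$ is nonzero, which is what makes the biconditional in part~(2) go through.
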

	
	For a given 2-D $(\lambda_1, \lambda_2)$-constacyclic code with its CZ set $V_c$, we have the following result. We refer the readers to \cite{CCDS} for more details.
	\begin{lemma}\label{spectralNullLemma}
		Let $\mathcal{C}$ be a 2-D $(\lambda_1, \lambda_2)$-constacyclic code with its CZ set $V_c$. If $(\gamma \zeta_1^{\theta},\beta \zeta_2^{\phi})\in V_c$, then the value at the location $(\theta,\phi)$  of the spectrum array $C$ in the frequency domain is zero, that is, $C_{\theta,\phi}=0$ for each codeword $c\in \mathcal{C}$.
		\begin{proof}
			Let $\mathcal{C}$ be a 2-D $(\lambda_1, \lambda_2)$-constacyclic code with its CZ set $V_c$. Then by the definition of the CZ set, we have
			\begin{equation*}
				(\gamma \zeta_1^{\theta},\beta \zeta_2^{\phi})\in V_c \implies c(\gamma \zeta_1^{\theta},\beta \zeta_2^{\phi}) = 0 \ \forall \ c\in C.
			\end{equation*}
			By using Eq. \eqref{spectralnullEq}, we have $C_{\theta, \phi}=0$.
		\end{proof}
	\end{lemma}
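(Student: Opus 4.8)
The plan is to reduce the statement directly to the evaluation identity \eqref{spectralnullEq}, so the proof will be essentially a two-line chain of implications. First I would unwind the definition of the common zero set: by construction $V_c = V_\circ \cap V_1$, where $V_1$ is the set of points at which \emph{every} codeword of $\mathcal{C}$ vanishes. Thus the hypothesis $(\gamma\zeta_1^{\theta},\beta\zeta_2^{\phi})\in V_c$ forces $(\gamma\zeta_1^{\theta},\beta\zeta_2^{\phi})\in V_1$, which by definition means $c(\gamma\zeta_1^{\theta},\beta\zeta_2^{\phi})=0$ for every $c\in\mathcal{C}$.

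Second, I would invoke the bridge between the time domain and the transform domain already established in Section \ref{section4}. The bivariate polynomial attached to a codeword is $c(x,y)=\sum_{i,j}c_{i,j}x^{i}y^{j}$, and evaluating it at $(\gamma\zeta_1^{\theta},\beta\zeta_2^{\phi})$ produces exactly the double sum that defines the spectral component $C_{\theta,\phi}$ in \eqref{vectorFreq}; this is precisely the content of \eqref{spectralnullEq} (equivalently, part (1) of Theorem \ref{th6}). Chaining the two observations gives $C_{\theta,\phi}=c(\gamma\zeta_1^{\theta},\beta\zeta_2^{\phi})=0$ for every $c\in\mathcal{C}$, which is the assertion.

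The argument is genuinely short, so I do not expect a real obstacle; the only point that merits a word of care is the bookkeeping that the coordinate pair $(\gamma\zeta_1^{\theta},\beta\zeta_2^{\phi})$ appearing in the hypothesis really is of the form used to index the spectrum, i.e. that $\theta,\phi$ may be taken in the ranges $0\le\theta\le M-1$, $0\le\phi\le N-1$ of $\Omega$. This is guaranteed because $V_\circ\subseteq\{(a,b)\mid a^M=\lambda_1,\ b^N=\lambda_2\}$, and by \eqref{eq1} together with its $y$-analogue every such pair can be written uniquely as $(\gamma\zeta_1^{\theta},\beta\zeta_2^{\phi})$ with $(\theta,\phi)\in\Omega$. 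Once this identification is made explicit, the conclusion $C_{\theta,\phi}=0$ holds simultaneously for all codewords, completing the proof.
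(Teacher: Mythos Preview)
Your proposal is correct and follows essentially the same approach as the paper: unwind the definition of $V_c$ to get $c(\gamma\zeta_1^{\theta},\beta\zeta_2^{\phi})=0$ for every codeword, then invoke the evaluation identity \eqref{spectralnullEq} to conclude $C_{\theta,\phi}=0$. The paper's proof is exactly this two-step chain, without the extra bookkeeping remark about $(\theta,\phi)\in\Omega$, which is a harmless addition on your part.
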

	\begin{remark}
		Note that we have $\vert V_c \vert = \vert \Pi \vert $ many spectral nulls in each spectrum of codewords in the frequency domain, for a given 2-D $(\lambda_1, \lambda_2)$-constacyclic code with CZ set $V_c$.
	\end{remark}
	The following example illustrates the above result.
	\begin{example}\label{NullExamp}
		Suppose $\lambda_1 = \lambda_2 =2 \in \mathbb{F}_3$. Let us fix $\gamma$ and $\beta$ as primitive  $4^{\mathrm{th}}$ and $5^{\mathrm{th}}$ roots of $\lambda_1$ and $\lambda_2$, respectively. Let $\gamma=\alpha^{10}$, $\beta=\alpha^{8}$, where $\alpha$ is a root of a primitive polynomial $x^4+x+2$, that is, $\mathbb{F}_{3^4}=\mathbb{F}_{3}(\alpha)$. Here, the conjugate set of $\gamma = \{\gamma, \gamma^3\}$, and the conjugate set of $\beta = \{\beta, \beta^3, 2\beta^4, 2\beta^2\}$.
		Let $\mathcal{C}$ be a $(2,2)$-constacyclic code of area $4\times 5$ over $\mathbb{F}_{3}$ having the following CZ set
		\begin{equation}\label{EqforspecNullEx}
			V_{c}=\{(\gamma, \beta), (\gamma^3, \beta^3), (\gamma, 2\beta^4), (\gamma^3, 2\beta^2); (\gamma, \beta^3), (\gamma^3, 2\beta^4), (\gamma, 2\beta^2), (\gamma^3, \beta)\}
		\end{equation} 
		and ECZ set
		\begin{equation*}
			\hat{V}_{c}=\{(\gamma, \beta), (\gamma, \beta^3)\}.
		\end{equation*} 
		Let $\xi_1=\gamma$, $\eta_{1,1}=\beta$, $\eta_{1,2}=\beta^3$. We have $$g_{\xi 1}(x)=(x-\gamma)(x-\gamma^3)=x^2 + x + 2,$$ the minimal polynomial of $\gamma$ over $\mathbb{F}_3$. \\
		Let $\theta = \alpha^{10}$. Consider $\mathbb{F}_{3^2}=\mathbb{F}_3(\theta)$ subfield of $\mathbb{F}_{3^4}$, where $\theta^2+2\theta+2=0$. Now, we have 
		\begin{equation*}
			\begin{split}
				M_{1,1}(y)&=(y-\beta)(y-\beta^9)\\
				&=(y-\beta)(y-2\beta^4)\\
				&=y^2-(\beta+2\beta^4)y+2\beta^5\\
				&=y^2+\theta y+1,
			\end{split}
		\end{equation*} 
		the minimal polynomial of $\eta_{1,1}$ over $\mathbb{F}_{3^2}$ and 
		\begin{equation*}
			\begin{split}
				M_{1,2}(y)
				&=(y-\beta^3)(y-\beta^{27})\\
				&=(y-\beta^3)(y-2\beta^2)\\
				&=y^2-(\beta^3 + 2\beta^2)y+2\beta^5\\
				&=y^2+\theta^3 y+1,
			\end{split}
		\end{equation*} 
		the minimal polynomial of $\eta_{1,2}$ over $\mathbb{F}_{3^2}$. Thus, $$G_{\eta 1}(y)=M_{1,1}(y)\cdot M_{1,2}(y)=y^4+2y^3+y^2+2y+1.$$ Here, we have $$m_1=2, n_{1,1}=2, n_{1,2}=2, n_{1}=4.$$ 
		Consider the ordered basis $\{1, \alpha, \alpha^2, \alpha^3 \}$ of $\mathbb{F}_{3^4}$ over $\mathbb{F}_{3}$. The check tensor of $\mathcal{C}$ is given by $$\textbf{H}=[\textbf{h}_{k,l}],$$ 
		where $\textbf{h}_{k,l}=\big(\textbf{h}_{1,1}^{(k,l)}, \textbf{h}_{1,2}^{(k,l)}\big)$, and $\textbf{h}_{i,j}^{(k,l)}$ is an $m_{i}n_{i,j}$-tuple coefficient vector of $\xi_{i}^k\eta_{i,j}^l$ over $\mathbb{F}_3$. For instance $\textbf{h}_{1,3}=(\textbf{h}_{1,1}^{(1,3)}, \textbf{h}_{1,2}^{(1,3)})$. Here, $\textbf{h}_{1,1}^{(1,3)}$ is $4$-tuple coefficient vector of $\gamma \beta^3=\alpha^{34}$ over $\mathbb{F}_3$, that is, $\textbf{h}_{1,1}^{(1,3)}=(2210)$. Similarly, we have $\textbf{h}_{1,2}^{(1,3)}=(0010)$. Hence, $\textbf{h}_{1,3}=(22100010)$. By similar calculations, we have

		\[\textbf{H}=
		\begin{pmatrix}
			\textcolor{red}{(1 0 0 0 1 0 0 0)}&
			\textcolor{red}{(1 1 1 0 1 0 1 0)}&
			\textcolor{red}{(2 1 0 2 2 2 2 0)}&
			\textcolor{red}{(1 0 1 0 1 0 1 2)}&
			(2 0 2 1 2 1 0 2)\\
			\textcolor{red}{(1 2 1 1 1 2 1 1)}&
			\textcolor{red}{(0 2 0 1 2 2 1 0)}&
			\textcolor{red}{(1 2 1 0 0 1 0 2)}&
			\textcolor{red}{(2 2 1 0 0 0 1 0)}&
			(0 0 2 0 1 2 1 0)\\
			(0 1 2 2 0 1 2 2)&
			(1 2 1 2 2 1 0 0)&
			(1 2 2 2 2 1 2 1)&
			(2 1 0 0 1 0 0 2)&
			(2 0 0 1 1 2 2 2)\\
			(1 1 2 2 1 1 2 2)&
			(2 0 2 2 0 1 1 0)&
			(0 0 2 1 1 0 1 1)&
			(0 1 1 0 2 0 1 1)&
			(1 0 2 2 0 0 2 1)
		\end{pmatrix}.\]
		From Eq. \eqref{EqforspecNullEx}, for each $(\theta, \phi)\in \{(0,0), (0,1), (0,3), (0,4), (1,0), (1,1), (1,3), (1,4)\}$, we have 
		$$(\gamma \zeta_1^{\theta}, \beta \zeta_2^{\phi}) \in V_c .$$ 
		By using Lemma \ref{spectralNullLemma}, we get $C_{\theta, \phi}=0$\, $\forall \, (\theta, \phi)\in \{(0,0), (0,1), (0,3), (0,4), (1,0), (1,1), (1,3), (1,4)\}$. The corresp- onding time and frequency domain codewords are displayed in Figure \ref{fig:timefreq}.
		
		\begin{figure}
			\centering
			\includegraphics[width=0.75\linewidth]{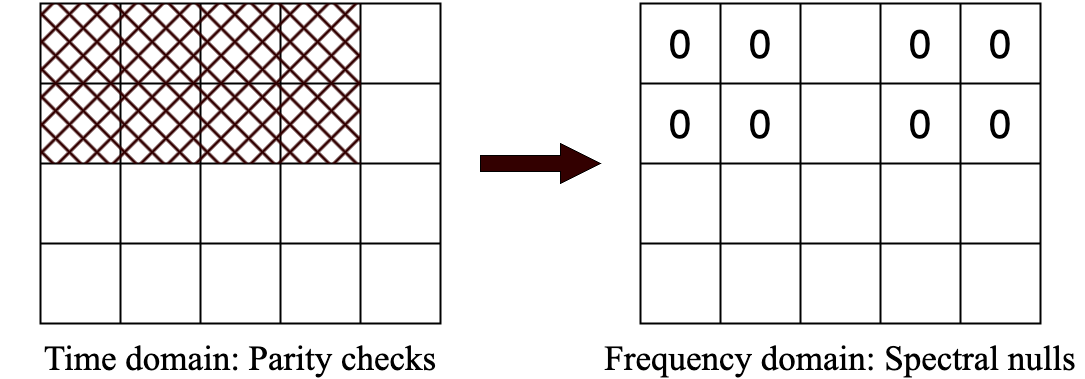}
			\caption{The parity check locations  are organized in contiguous positions in the time domain are hashed and the blank boxes shows the message bit locations. The CZ set identifies the coordinates $(0, 0), (0, 1), (0, 3), (0, 4), (1, 0), (1, 1), (1, 3)$ and $(1, 4)$ as spectral nulls of the given code in Example \ref{NullExamp}.}
			\label{fig:timefreq}
		\end{figure}
	\end{example}

	\section{Decoding Algorithms for 2-D Constacyclic Codes}\label{section5}
	
	Systematic encoding of constacyclic arrays can be done in the time domain based on the generator tensor derived from the parity check tensor of the 2-D constacyclic code. For more details on the encoder design, the reader is referred to Section 5 of our recent paper \cite{CCDS}.\par 
	In this section, our focus will be to derive decoding algorithms for 2-D constacyclic codes. The decoding algorithm for 2-D constacyclic codes involves: (a) identifying the error locations, and (b) the error values at those locations. 
	
	\subsection{Error-detections and their locations} 
	
	Let $r$ be the received array of size $M\times N$ over $\mathbb{F}_q$ and its bi-variate polynomial form is 
	\begin{equation}
		r(x, y)=\sum_{i=0}^{M-1}\sum_{j-0}^{N-1}r_{i,j}x^i y^j.
	\end{equation} 
	Now, we recall a result regarding a bi-variate polynomial to be a codeword from \cite{CCDS}.
	\begin{theorem}\cite[Theorem 3.1]{CCDS}\label{thm2Dconsta}
		Let $\mathcal{C}$ be a 2-D $(\lambda_1, \lambda_2)$-constacyclic code of area $M \times N$ over $\mathbb{F}_q$. Then, a polynomial $h(x, y)$ in $\mathcal{P}[\Omega]$ is a codeword of $\mathcal{C}$ if and only if $h(a, b)= 0$ for each $(a, b) \in {V}_{c}$.
	\end{theorem}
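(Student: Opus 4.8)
The plan is to translate the statement into the language of ideals and the 2-D FFFT and then run a Chinese-Remainder-plus-descent argument. Write $R_q:=\mathbb{F}_q[x,y]/\langle x^M-\lambda_1,\,y^N-\lambda_2\rangle$, so that $\mathcal{C}$ is an ideal of $R_q$, and recall from Eq.~\eqref{spectralnullEq} (Theorem~\ref{th6}) that ``$h$ has a root at $(\gamma\zeta_1^\theta,\beta\zeta_2^\phi)$'' is the same as ``the $(\theta,\phi)$-th spectral component of $h$ vanishes''. Thus the claim is a Nullstellensatz-type statement: membership in the ideal $\mathcal{C}$ is detected by vanishing on the common zero set. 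The forward implication is immediate: if $h\in\mathcal{C}$ then, since $V_c=V_\circ\cap V_1\subseteq V_1$ and $V_1$ is by definition the set of common roots of all codewords, $h(a,b)=0$ for every $(a,b)\in V_c$ (this is exactly Lemma~\ref{spectralNullLemma} read through Eq.~\eqref{spectralnullEq}).

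For the reverse implication I would first pass to the extension field $\mathbb{F}_{q^t}$ supplied by Lemma~\ref{ExtLemma}: over it both $x^M-\lambda_1=\prod_{\theta=0}^{M-1}(x-\gamma\zeta_1^\theta)$ and $y^N-\lambda_2=\prod_{\phi=0}^{N-1}(y-\beta\zeta_2^\phi)$ split into pairwise distinct linear factors, as in Eq.~\eqref{eq1} and its analogue for $y$. Consequently the maximal ideals $\langle x-\gamma\zeta_1^\theta,\ y-\beta\zeta_2^\phi\rangle$, $(\theta,\phi)\in\Omega$, of $R_{q^t}:=\mathbb{F}_{q^t}[x,y]/\langle x^M-\lambda_1,\,y^N-\lambda_2\rangle$ are pairwise comaximal with zero intersection, and the Chinese Remainder Theorem yields a ring isomorphism
\[
R_{q^t}\ \xrightarrow{\ \sim\ }\ \prod_{(\theta,\phi)\in\Omega}\mathbb{F}_{q^t},\qquad h\ \longmapsto\ \big(h(\gamma\zeta_1^\theta,\beta\zeta_2^\phi)\big)_{(\theta,\phi)\in\Omega},
\]
which by Eq.~\eqref{spectralnullEq} is precisely the 2-D FFFT.

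Next I would analyse the image of the code. Since $R_{q^t}=R_q\otimes_{\mathbb{F}_q}\mathbb{F}_{q^t}$, the $\mathbb{F}_{q^t}$-linear span $\mathcal{C}_{q^t}$ of $\mathcal{C}$ is an ideal of $R_{q^t}$; using the standard idempotents, every ideal of a finite product of fields is a coordinate ideal, so there is a set $Z\subseteq\Omega$ with $\mathcal{C}_{q^t}=\{\,h\in R_{q^t}: h(\gamma\zeta_1^\theta,\beta\zeta_2^\phi)=0\text{ for all }(\theta,\phi)\in Z\,\}$. One then checks that $\{(\gamma\zeta_1^\theta,\beta\zeta_2^\phi):(\theta,\phi)\in Z\}$ is exactly the set of points of $V_\circ$ annihilated by all codewords, i.e.\ equals $V_\circ\cap V_1=V_c$: ``$\subseteq$'' is by definition, and for ``$\supseteq$'' one notes that if $(\theta,\phi)\notin Z$ then the CRT-preimage of the idempotent at $(\theta,\phi)$ lies in $\mathcal{C}_{q^t}$ and is nonzero there (using that $\mathcal{C}$ and $\mathcal{C}_{q^t}$ share an $\mathbb{F}_q$-basis and hence have the same common roots). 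Thus $\mathcal{C}_{q^t}=\{h\in R_{q^t}:h(a,b)=0\ \forall\,(a,b)\in V_c\}$. Finally I would descend: a polynomial $h\in\mathcal{P}[\Omega]$, viewed in $R_q$, that vanishes on $V_c$ lies in $\mathcal{C}_{q^t}$ by the last identity, and since $\mathcal{C}_{q^t}\cap R_q=\mathcal{C}$ for any $\mathbb{F}_q$-subspace — expand $h$ in an $\mathbb{F}_q$-basis of $\mathcal{C}$ completed to a basis of $R_q$; its coordinates come from the $\mathbb{F}_q$-coordinates of $h$ via an $\mathbb{F}_q$-change of basis, hence lie in $\mathbb{F}_q$, forcing the components outside $\mathcal{C}$ to vanish — we get $h\in\mathcal{C}$, finishing the proof.

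I expect the main obstacle to be the middle step of the reverse direction: realizing $\mathcal{C}_{q^t}$ as a coordinate ideal and, above all, correctly matching its support $Z$ with the CZ set $V_c$, together with the Galois-descent identity $\mathcal{C}_{q^t}\cap R_q=\mathcal{C}$. Throughout one must keep the two base fields apart, since the CRT/FFFT factorization exists only over $\mathbb{F}_{q^t}$; the consistency of the descent is underwritten by the fact (noted in the Remark after the ECZ definition) that $V_c$ is stable under the Frobenius $(a,b)\mapsto(a^q,b^q)$, which is automatic once the linear-algebra descent lemma is in place.
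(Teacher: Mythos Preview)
The paper does not actually prove this theorem: it is quoted verbatim from \cite[Theorem 3.1]{CCDS} and no argument is supplied here, so there is no ``paper's own proof'' to compare against.

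That said, your proposal is a correct and self-contained proof. The forward implication is immediate from the definition of $V_c$, exactly as you say. For the converse, your Chinese-Remainder-plus-descent route is the natural one: over $\mathbb{F}_{q^t}$ the ring $R_{q^t}$ splits as a product of copies of $\mathbb{F}_{q^t}$ indexed by $V_\circ$ (this is precisely the 2-D FFFT isomorphism), every ideal of such a product is a coordinate ideal, and the identification of its zero-support with $V_c$ follows because $\mathcal{C}$ and its $\mathbb{F}_{q^t}$-span have the same common zeros in $V_\circ$. The descent step $\mathcal{C}_{q^t}\cap R_q=\mathcal{C}$ is the elementary linear-algebra fact that scalar extension of a subspace, intersected with the original ambient space, returns the subspace; your basis argument is fine. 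The Frobenius-stability remark at the end is a pleasant consistency check but is not logically needed for the argument you give. Nothing is missing.
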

	Let $r$ be the received array. By using the above result, if $r(a, b)=0$ for all $(a, b)\in {V}_{c}$, then $r(x, y)$ is a codeword. If $r(a, b)\neq 0$ for some $(a, b)\in {V}_{c}$ then $r(x,y)$ is not a codeword, that is, an error is detected. In this case, it is equivalent to say,
	\begin{equation*}
		r(a,b) = (c(x,y)+e(x,y))\vert_{(a,b)}=e(a,b)\neq 0,
	\end{equation*}
	for $(a,b) \in {V}_{c}$, that is, $e(x,y)$ should not vanish on ${V}_{c}$, implying that the error pattern $e(x,y)$ is not a codeword.
	\begin{theorem}\label{2Dto1D}
		Let $\mathcal{C}$ be a 2-D $(\lambda_1, \lambda_2)$-constacyclic code of area $M\times N$ over $\mathbb{F}_q$. Suppose $W_i \subseteq \mathbb{F}_q^{N}$ denotes the set of $i^{\rm th}$-row of all the codewords of $\mathcal{C}$. Then $W_i$ is a 1-D row $\lambda_2$-constacyclic code of length $N$ over $\mathbb{F}_q$ for each $i=0, 1, \dots, M-1$.
		\begin{proof}
			Firstly, we will prove each $W_i$ is a subspace of the vector space $\mathbb{F}_q^{N}$. Let $\alpha, \beta \in \mathbb{F}_q$ and $u_{r_i}, u_{r_i}' \in W_i$. Then by construction of $W_i$, $\exists$ $u, u' \in \mathcal{C}$. Since $\mathcal{C}$ is a linear code, $\alpha u + \beta u' \in \mathcal{C}$. Then by definition, $i^{\rm th}$-row of the array $\alpha u + \beta u'$ is $\alpha u_{r_i} + \beta u'_{r_i} \in W_i$. This proves $W_i$ is a linear code for $i=0, 1, \dots, M-1$.\\
			Now, we will prove each $W_i$ is invariant under right $\lambda_2$-shift. Let $u_{r_i}\in W_i$ be an arbitrary element. Then $\exists$ $u\in \mathcal{C}$, by construction $W_i$. Since $\mathcal{C}$ is a $(\lambda_1, \lambda_2)$-constacyclic code, then its right row $\lambda_2$-shift $u_{\lambda_2} \in \mathcal{C}$. Then the $i^{\rm th}$-row of the codeword $u_{\lambda_2}$ is ${u_{\lambda_2}}_{r_i}\in W_i$, which is the right $\lambda_2$-shift of $u_{r_i}$. This proves that $W_i$ is a $\lambda_2$-constacyclic code of length $N$ over $\mathbb{F}_q$ for $i=0, 1, \dots, M-1$.
		\end{proof}
	\end{theorem}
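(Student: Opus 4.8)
The plan is to verify directly the two defining properties of a 1-D row $\lambda_2$-constacyclic code for the set $W_i$, with $i$ fixed: first, that $W_i$ is an $\mathbb{F}_q$-linear subspace of $\mathbb{F}_q^N$, and second, that $W_i$ is invariant under the 1-D $\lambda_2$-constacyclic shift $(w_0,\dots,w_{N-1})\mapsto(\lambda_2 w_{N-1},w_0,\dots,w_{N-2})$. Non-emptiness is immediate since $\mathcal{C}$ is a subspace, so the all-zero array lies in $\mathcal{C}$ and hence the zero row lies in $W_i$. The guiding idea throughout is that ``reading off the $i$-th row'' is a linear map $\pi_i:\mathbb{M}_{M\times N}(\mathbb{F}_q)\to\mathbb{F}_q^N$ and that this map intertwines the 2-D row $\lambda_2$-shift on arrays with the 1-D $\lambda_2$-shift on vectors; $W_i=\pi_i(\mathcal{C})$ then inherits both structures.

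For linearity, I would take $w,w'\in W_i$ and scalars $a,b\in\mathbb{F}_q$, choose codewords $u,u'\in\mathcal{C}$ with $\pi_i(u)=w$ and $\pi_i(u')=w'$, and use that $\mathcal{C}$ is a 2-D linear code to get $au+bu'\in\mathcal{C}$. Since $\pi_i(au+bu')=a\pi_i(u)+b\pi_i(u')=aw+bw'$, we conclude $aw+bw'\in W_i$, so $W_i$ is a subspace of $\mathbb{F}_q^N$. For shift invariance, the only point that genuinely needs to be spelled out is that the row $\lambda_2$-constacyclic shift $u\mapsto u_{\lambda_2}$ of an $M\times N$ array (as displayed in the itemized description of a row $\lambda_2$-constacyclic code in Section~\ref{section2}) acts on each row independently; that is, the $i$-th row of $u_{\lambda_2}$ equals the 1-D $\lambda_2$-constacyclic shift of the $i$-th row of $u$. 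Granting this, for any $w\in W_i$ pick $u\in\mathcal{C}$ with $\pi_i(u)=w$; since $\mathcal{C}$ is row $\lambda_2$-constacyclic, $u_{\lambda_2}\in\mathcal{C}$, so $\pi_i(u_{\lambda_2})\in W_i$, and by the observation this vector is exactly the $\lambda_2$-shift of $w$. Hence $W_i$ is closed under the 1-D $\lambda_2$-shift, which together with linearity shows $W_i$ is a row $\lambda_2$-constacyclic code of length $N$.

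I do not expect any real obstacle here: the argument is the standard ``projection onto a block of coordinates preserves linearity and the shift structure'' pattern, and the row-wise decomposition of the 2-D row-shift is read off immediately from its matrix description. The one thing worth stating explicitly, to avoid confusion with the column index, is that this proof uses \emph{only} the row $\lambda_2$-constacyclic property of $\mathcal{C}$; the column $\lambda_1$-shift is irrelevant for this statement (it would instead enter in the analogous claim about columns of $\mathcal{C}$).
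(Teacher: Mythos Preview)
Your proposal is correct and follows essentially the same approach as the paper's proof: both verify linearity of $W_i$ by lifting two rows to codewords and using linearity of $\mathcal{C}$, then verify closure under the $\lambda_2$-shift by lifting a row to a codeword and applying the row $\lambda_2$-constacyclic shift of $\mathcal{C}$. Your explicit use of the projection $\pi_i$ and the remark that only the row $\lambda_2$-property is needed are minor expository additions, not a different method.
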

	\begin{theorem}\label{2Dto1Dcol}
				Let $\mathcal{C}$ be a 2-D $(\lambda_1, \lambda_2)$-constacyclic code of area $M\times N$ over $\mathbb{F}_q$. Suppose $Z_j \subseteq \mathbb{F}_q^{M}$ denotes the set of $j^{\rm th}$-column of all the codewords of $\mathcal{C}$. Then $Z_j$ is a 1-D row $\lambda_1$-constacyclic code of length $M$ over $\mathbb{F}_q$ for each $j = 0, 1, \dots, N-1$.
				\begin{proof}
				The argument used to show Theorem \ref{2Dto1D} can also be used to prove this.
				\end{proof}
	\end{theorem}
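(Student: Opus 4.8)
The plan is to mirror, almost verbatim, the two-step argument used for Theorem~\ref{2Dto1D}, with rows replaced by columns and the row $\lambda_2$-shift replaced by the column $\lambda_1$-shift. \textbf{Step 1 (linearity):} I would first show that each $Z_j$ is an $\mathbb{F}_q$-subspace of $\mathbb{F}_q^{M}$. Given scalars $a, b \in \mathbb{F}_q$ and vectors $v_{c_j}, v'_{c_j} \in Z_j$, the definition of $Z_j$ yields codewords $v, v' \in \mathcal{C}$ whose $j^{\rm th}$ columns are $v_{c_j}$ and $v'_{c_j}$ respectively. Since $\mathcal{C}$ is a 2-D linear code, $a v + b v' \in \mathcal{C}$, and its $j^{\rm th}$ column is $a v_{c_j} + b v'_{c_j}$, so this vector lies in $Z_j$; hence $Z_j$ is closed under $\mathbb{F}_q$-linear combinations.

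\textbf{Step 2 (constacyclic shift invariance):} Let $v_{c_j} \in Z_j$ and pick $v \in \mathcal{C}$ whose $j^{\rm th}$ column is $v_{c_j}$. Because $\mathcal{C}$ is $(\lambda_1, \lambda_2)$-constacyclic, it is in particular column $\lambda_1$-constacyclic, so the column-shifted array $v_{\lambda_1}$ (as displayed in Section~\ref{section2}) again belongs to $\mathcal{C}$. The point to verify is that the column $\lambda_1$-shift acts independently on each column: reading off the $j^{\rm th}$ column of $v_{\lambda_1}$ from the matrix description of $c_{\lambda_1}$ gives exactly the 1-D $\lambda_1$-constacyclic shift of $v_{c_j}$. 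Hence that shifted vector lies in $Z_j$, and together with Step~1 this shows $Z_j$ is a $\lambda_1$-constacyclic code of length $M$ over $\mathbb{F}_q$ for every $j = 0, 1, \dots, N-1$.

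I do not anticipate a genuine obstacle here: the only thing requiring a moment's care is the bookkeeping remark that a column-shift of the full $M \times N$ array restricts, on each fixed column index, to the ordinary 1-D $\lambda_1$-constacyclic shift of that column — which is immediate from the matrix form of $c_{\lambda_1}$ in the preliminaries. A one-line alternative would be to interchange the roles of the $x$- and $y$-coordinates and invoke Theorem~\ref{2Dto1D} for the resulting $(\lambda_2, \lambda_1)$-constacyclic code obtained from $\mathcal{C}$ by transposing the arrays, but the direct argument above is cleaner and self-contained.
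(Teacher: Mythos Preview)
Your proposal is correct and is precisely the approach the paper takes: the paper's entire proof is the single sentence ``The argument used to show Theorem~\ref{2Dto1D} can also be used to prove this,'' and your two-step write-up is exactly that argument with rows replaced by columns and the $\lambda_2$-shift replaced by the column $\lambda_1$-shift.
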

	
	Now, the next task is to find the locations of these errors. Our central idea for finding the error locations would be to iterate over all rows and columns and check the syndrome by fixing one of the variables $x=1$ or $y=1$ in $R(x,y)$, respectively. The intersection of the error locations of the rows and columns provides the 2-D error coordinates. This idea is summarized as follows.
	
	\begin{theorem}\label{thm2DconstaSpect}
		Consider a 2-D $(\lambda_1, \lambda_2)$-constacyclic code of area $M \times N$ over $\mathbb{F}_q$. Let $r$ be the received array and $r(x, y)$ be its polynomial form in time-domain. Assume that $r(a, b)\neq 0$ for some $(a, b)\in {V}_{c}$. Let $(a, b)\in {V}_{c}$ be an arbitrary element.
		\begin{enumerate}
			\item \textbf{Error locations in a row:}  For each $i=0, 1, \dots, M-1$, consider the following polynomial
			\begin{equation*}
				r_{i}(y):=\sum_{j=0}^{N-1}r_{i,j}y^j,
			\end{equation*}
			where $r_{i,j}$ is the $(i,j)^{\rm th}$ element of the received array $r$.\\
			If $r_{i}(b)=0$ then there is no error in the $i^{\rm th}$-row, and if $r_{i}(b)\neq 0$ then there is an error in the $i^{\rm th}$-row of the received array $r$ in time-domain.
			
			\item \textbf{Error locations in a column:} For each $j=0, 1, \dots, N-1$, consider the following polynomial
			\begin{equation*}
				r_{j}(x):=\sum_{i=0}^{M-1}r_{i,j}x^i,
			\end{equation*}
			where $r_{i,j}$ is the $(i,j)^{\rm th}$ element of the received array $r$.\\
			If $r_{j}(a)=0$ then there is no error in the $j^{\rm th}$-column, and if $r_{j}(a)\neq 0$ then there is an error in the $j^{\rm th}$-column of the received array $r$.
		\end{enumerate}
		\begin{proof}
			Let $\mathcal{C}$ be a 2-D $(\lambda_1, \lambda_2)$-constacyclic code of area $M \times N$ over $\mathbb{F}_q$. Consider a codeword $c(x, y)\in \mathcal{C}$. The evaluation of $c(a,b)=0$ for $(a,b)\in {V_c}$ over the bi-index implies that the evaluation of $c_i(b)=0$ holds true over every fixed row location $i$ since each row is a valid codeword for a 1-D constacyclic code (by using Theorem \ref{2Dto1D} and 1-D version of Theorem \ref{thm2Dconsta}). Let $r$ be the received array. Then $r(x, y) = c(x,y) + e(x,y)$. For a fixed row location $i$, let $r_{i}(y) = \sum_{j=0}^{N-1}r_{i,j}y^j$. Since 
			\begin{equation}\label{eqS}
				r_{i}(y)=c_{i}(y)+e_{i}(y),
			\end{equation}
			by evaluating  Eq. \ref{eqS} at $(a,b) \in V_c$, for a fixed row location $i$, we have 
			\begin{equation*}
				r_{i}(b)=e_{i}(b).
			\end{equation*}
			Thus, the syndromes obtained by evaluating $r_i(b)$ can yield error locations along the rows.\\ 
			Similarly, we can fix the position of a column $j$ and evaluate the polynomials received along the columns, reflecting the same idea we have for the rows, by using Theorem \ref{2Dto1Dcol}. This completes the proof. 
		\end{proof}
	\end{theorem}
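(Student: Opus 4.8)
The plan is to reduce the two-dimensional error-detection question to a collection of one-dimensional ones: one for each of the $M$ rows and one for each of the $N$ columns of the received array. Once this decoupling is in place, each individual row (resp.\ column) test follows from the 1-D analogue of Theorem \ref{thm2Dconsta} applied to the induced 1-D constacyclic code of that row (resp.\ column).

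For part (1), I would fix an arbitrary $(a,b)\in V_c$ and an arbitrary codeword $c(x,y)=\sum_{i=0}^{M-1}c_i(y)\,x^i\in\mathcal{C}$, where $c_i(y)=\sum_{j=0}^{N-1}c_{i,j}y^j$ is the polynomial carried by the $i$-th row. By Theorem \ref{2Dto1D} the row vector $(c_{i,0},\dots,c_{i,N-1})$ is a codeword of the 1-D row $\lambda_2$-constacyclic code $W_i$ of length $N$, and the 1-D specialization of Theorem \ref{thm2Dconsta} says that a length-$N$ polynomial lies in $W_i$ exactly when it vanishes on the common zero set of $W_i$. The key step is then to observe that the second coordinate $b$ of any point of $V_c$ must lie in that common zero set, whence $c_i(b)=0$ for every codeword and every $i$. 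Writing $r(x,y)=c(x,y)+e(x,y)$ and restricting to the $i$-th row gives $r_i(y)=c_i(y)+e_i(y)$; evaluating at $y=b$ and using $c_i(b)=0$ yields $r_i(b)=e_i(b)$. Hence $r_i(b)\neq 0$ forces $e_i(y)\not\equiv 0$, i.e.\ an error in row $i$, while $r_i(b)=0$ is precisely the vanishing of the induced 1-D row syndrome, which is the asserted dichotomy.

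Part (2) is obtained by the same argument with the two coordinates interchanged: Theorem \ref{2Dto1Dcol} identifies the $j$-th column set $Z_j$ with a 1-D $\lambda_1$-constacyclic code of length $M$, the first coordinate $a$ of a point of $V_c$ lies in the common zero set of each $Z_j$, and $r_j(x)=c_j(x)+e_j(x)$ evaluated at $x=a$ gives $r_j(a)=e_j(a)$; intersecting the rows flagged in part (1) with the columns flagged here then localises the error coordinates. The step I expect to be the main obstacle is the claim that $c_i(b)=0$ for \emph{every} codeword and \emph{every} row index $i$ (and symmetrically $c_j(a)=0$): the bare hypothesis $(a,b)\in V_c$ only yields the single scalar relation $\sum_{i=0}^{M-1}c_i(b)\,a^i=0$, so one must bring in the constacyclic structure via Theorem \ref{2Dto1D}/Theorem \ref{2Dto1Dcol} together with the description of the common zero sets of $W_i$ and $Z_j$ furnished by the 1-D form of Theorem \ref{thm2Dconsta}; everything after that reduces to the short linear identities $r_i(b)=e_i(b)$ and $r_j(a)=e_j(a)$.
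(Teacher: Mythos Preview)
Your proposal mirrors the paper's own argument: invoke Theorems \ref{2Dto1D} and \ref{2Dto1Dcol} to view each row and column as a 1-D constacyclic codeword, appeal to the 1-D form of Theorem \ref{thm2Dconsta} to obtain $c_i(b)=0$ and $c_j(a)=0$, and then read off $r_i(b)=e_i(b)$ and $r_j(a)=e_j(a)$ from the row/column decomposition $r_i=c_i+e_i$. You have also correctly singled out the one delicate step---that the second coordinate $b$ of a point of $V_c$ lies in the common zero set of every row code $W_i$---and this is precisely where the paper's proof rests on the same pair of ingredients without further elaboration.
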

	\begin{remark}
		Based on the above discussion, we get the required set $\mathbb{E}$ of error locations $(i, j)$ in the received array~$r$, where $r_i(b)\neq 0$ and $r_j(a)\neq 0$ for $(a, b)\in V_c$. In this way, by iterating over every row and column index of the 2-D received array, we can get the required error locations of any random error or any other error patterns present in the received array within the error detection capability. 
	\end{remark}
	
	\begin{algorithm}
		\caption{Error Detection and Location in a 2-D Constacyclic Code over $\mathbb{F}_{q}$}
		\label{algo:1}
		\begin{algorithmic}[1]
			\STATE \textbf{Input:} Received array $r$
			
			\STATE \textbf{Step 1: Error detection}
			\STATE Evaluate $r(x,y)$ at each element of the CZ set $V_c$
			\IF {$r(a,b) = 0$ for all $(a, b) \in V_c$}
			\STATE No errors detected. Terminate the program.
			\ELSE
			\STATE An error detected. Proceed to \textbf{Step 2}.
			\ENDIF
			\STATE \textbf{Step 2: Error locations in a row}
			\FOR{each row $i=0$ to $M-1$}
			\STATE Evaluate $r_i(b) = \sum_{j=0}^{N-1} r_{i,j} \cdot b^{j}$ for all $(a,b) \in V_c$
			\ENDFOR
			\STATE \textbf{Step 3: Identify rows with errors}
			\STATE $\mathbb{E}_{r} \gets \{ i \mid r_i(b) \neq 0 \text{ where } (a,b)\in V_c \}$
			\STATE \textbf{Step 4: Error locations in a column}
			\FOR{each column $j=0$ to $N-1$}
			\STATE Evaluate $r_j(a) = \sum_{i=0}^{M-1} r_{i,j} \cdot a^{i}$ for all $(a,b)\in V_c$
			\ENDFOR
			\STATE \textbf{Step 5: Identify columns with errors}
			\STATE $\mathbb{E}_{c} \gets \{ j \mid r_j(a) \neq 0 \text{ where } (a,b)\in V_c\}$
			\STATE \textbf{Output:} $\mathbb{E}=\{(i, j) \vert \, i \in \mathbb{E}_{r}$ and $j \in \mathbb{E}_{c}\}$.
		\end{algorithmic}
	\end{algorithm}

	For completeness, we emphasize in this section the error-detection and correction capabilities of a $K$-dimensional 2-D linear code of area $M\times N$ having the number of parity bits is equal to $|V_c| = MN-K$ (see \cite{CCDS}).\par  
	From the Singleton bound, the minimum Hamming distance of a 2-D linear code is bounded above by
	\begin{equation}\label{Singlton}
		d_{\text{min}}\leq \vert V_c \vert +1.
	\end{equation} 
	If the minimum Hamming distance of a 2-D linear code attains the upper bound in Eq. \eqref{Singlton} then the code is  said to be 2-D maximum distance separable (MDS) code. The following result is well-known in the literature \cite{huffman}.
	\begin{theorem}\cite{huffman}\label{DetectCorrect}
		Let $C$ be a linear code over $\mathbb{F}_q$. If $C$ has minimum distance $d_{\text{min}}$,  then $C$ can detect upto $d_{\text{min}}-1 $ errors and correct upto $\lfloor \frac{d_{\text{min}}-1}{2} \rfloor$ errors in any codeword.
	\end{theorem}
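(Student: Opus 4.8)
The plan is to establish the detection and correction claims separately, using only the defining property of the minimum distance together with the triangle inequality for the Hamming metric; this is the classical textbook argument, so I would keep it short.

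\textbf{Detection.} First I would show that any nonzero error pattern of Hamming weight at most $d_{\text{min}}-1$ is detectable. Suppose a codeword $c \in C$ is transmitted and the received vector is $r = c + e$ with $1 \le w_H(e) \le d_{\text{min}}-1$. Since $e \ne 0$ we have $r \ne c$, and $d_H(r,c) = w_H(r-c) = w_H(e) \le d_{\text{min}}-1 < d_{\text{min}}$. If $r$ were itself a codeword, then $r$ and $c$ would be two distinct codewords at Hamming distance strictly below $d_{\text{min}}$, contradicting $d_{\text{min}} = \min\{ d_H(u,v) : u,v \in C,\ u \ne v\}$. Hence $r \notin C$ and the error is flagged --- in the 2-D constacyclic setting, precisely by the criterion of Theorem \ref{thm2Dconsta} that $r(x,y)$ fails to vanish on $V_c$.

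\textbf{Correction.} Next I would verify that nearest-codeword (minimum-distance) decoding corrects every error of weight at most $t := \lfloor (d_{\text{min}}-1)/2 \rfloor$. Let $c \in C$ be transmitted and $r = c+e$ received with $w_H(e) \le t$, so $d_H(r,c) \le t$. For any other codeword $c' \in C$ with $c' \ne c$, the triangle inequality gives
\begin{equation*}
 d_H(r,c') \ge d_H(c,c') - d_H(c,r) \ge d_{\text{min}} - t.
\end{equation*}
Because $t \le (d_{\text{min}}-1)/2$ forces $2t \le d_{\text{min}}-1$, we obtain $d_{\text{min}} - t \ge t+1 > t \ge d_H(r,c)$, so $c$ is the unique codeword closest to $r$ and minimum-distance decoding returns $c$. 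Equivalently, the closed Hamming balls of radius $t$ about distinct codewords are pairwise disjoint whenever $2t \le d_{\text{min}}-1$, which is exactly the stated bound.

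\textbf{Where the care goes.} There is no genuine obstacle here; the statement is classical and the argument is a couple of lines. The only delicate point is the bookkeeping with the floor function: one must check that $t = \lfloor (d_{\text{min}}-1)/2 \rfloor$ is the largest integer satisfying $2t \le d_{\text{min}}-1$, equivalently $2t < d_{\text{min}}$, since it is precisely this strict inequality that yields $d_H(r,c') > d_H(r,c)$ for every competing codeword and hence unambiguous decoding. For the codes of this paper one can finally combine these guarantees with the Singleton-type bound $d_{\text{min}} \le |V_c|+1$ of Eq.~\eqref{Singlton} to phrase the error-control capability directly in terms of the size of the common-zero set.
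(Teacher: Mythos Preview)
Your argument is correct and is exactly the standard textbook proof. Note, however, that the paper does not actually prove this theorem: it is stated with a citation to \cite{huffman} and no proof is given, since the result is classical. So there is nothing to compare against; your write-up simply supplies the omitted standard argument.
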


	Suppose that there are $t$-errors that occurred during transmission. Then Theorem \ref{thm2DconstaSpect} gives the error locations set for the error pattern $e(x,y)$ as follows:
	\begin{equation}\label{ErrorSetE}
		\mathbb{E}=\{(i_k, j_k)\, \vert \, k=1, 2, \dots, t\}.
	\end{equation}
	Let the error pattern polynomial be
	\begin{equation}\label{error-poly}
		e(x,y)= \sum_{k=1}^{t}e_{i_k, j_k}x^{i_k}y^{j_k}.
	\end{equation}
	By using Theorem \ref{DetectCorrect}, we have
	\begin{equation}\label{Eq41}
		t\leq \Big\lfloor \frac{\vert V_c\vert}{2} \Big\rfloor.
	\end{equation} 
	
	Once the error locations are identified, we need to obtain the error values. We shall first describe a method based on exhaustive search. Thius will be followed by efficient decoding methods to obtain the error values. 
	
	\subsection{Decoding based on exhaustive search over the error locations set: Method I}
	Suppose we have the set of error locations $\mathbb{E}$. Let $r$ be the received array of size $M \times N$ over $\mathbb{F}_q$. Then 
	\begin{equation*}
		r = c + e,
	\end{equation*}
	where $c$ is a codeword and $e$ is an error pattern. Algorithm \ref{algo:1} gives the error location set $\mathbb{E} = \{(i_1,j_1), (i_2,j_2), \dots, (i_t,j_t)\}$ in the error pattern array $e=[e_{i,j}]$. Then
	\begin{equation}
		\begin{cases}
			e_{i, j} \in \mathbb{F}_q^{\ast} \text{ if } (i, j)\in \mathbb{E},\\ 
			e_{i, j} = 0 \text{ if } (i, j)\in \mathbb{E}^c\\ 
		\end{cases}
	\end{equation}
	Let $A$ and $B$ be two arrays of same size. Consider an inner product of $A$ and $B$ denoted by $A \odot B$, defined as the sum of the component-wise product of elements of $A$ and $B$.
	Let $e_{\text{new}}$ be an initial error pattern, where nonzero coefficients on the position $\mathbb{E}$ are arbitrary elements from $\mathbb{F}_q$. As we have $\vert \mathbb{E} \vert =t$, therefore we have $q^t$ choices of error pattern $e_{\text{new}}$. So we perform a search among these choices and calculate syndrome $H\odot e_{new}$ each time and stop the search when the syndrome $H \odot e_{\text{new}}= H \odot r$, yielding the required error pattern $e_{\text{new}}$. Then $c=r-e_{\text{new}}$, gives the required codeword. Since we have $q^t$ choices of error pattern $e_{\text{new}}$, hence the required error array $e_{\text{new}}$ must exist and decoding must succeed within $q^{t}$ trials.
	\par
	Algorithm \ref{Method III} present the above randomized search decoding. The reader must note that the proposed algorithm can handle random errors as well as treating cluster errors as random errors up to a correctable distance.
	\begin{algorithm}[H]
		\caption{Decoding Based on Exhaustive Search over the Error Locations Set}\label{Method III}
		\begin{algorithmic}[1]
			\REQUIRE Check tensor array $H$, received array $r$ of size $M \times N$, error location set $\mathbb{E}$
			\ENSURE Corrected codeword $c$
			
			\STATE Compute the target syndrome: $S_{\text{req}} \gets H \cdot r$
			
			\STATE Let $\mathbb{E} = \{(i_1,j_1), (i_2,j_2), \dots, (i_t,j_t)\}$ where $t = |\mathbb{E}|$
			\STATE Let $\mathcal{F} \gets \mathrm{GF}(q)^t$ be the set of all $q^t$ possible error value combinations over positions in $\mathbb{E}$
			
			\FORALL{$(a_1, a_2, \dots, a_t) \in \mathcal{F}$}
			\STATE Initialize $e_{\text{new}}$ as a zero array of size $M \times N$
			\FOR{$k = 1$ to $t$}
			\STATE Set $e_{\text{new}}(i_k, j_k) \gets a_k$
			\ENDFOR
			
			\STATE Compute syndrome: $S_{\text{new}} \gets H \cdot e_{\text{new}}$
			
			\IF{$S_{\text{new}} = S_{\text{req}}$}
			\STATE Compute corrected codeword: $c \gets r - e_{\text{new}}$
			\RETURN $c$
			\ENDIF
			\ENDFOR
		\end{algorithmic}
	\end{algorithm}
	
	Now we introduce two more efficient decoding methods in the subsequent subsections.

	We recall a result from \cite{hoffman1973algebra}.
	
	\begin{theorem}\cite{hoffman1973algebra}\label{RankThm}
		Let $AX = B$ be a system of non-homogeneous linear equations over a finite field $\mathbb{F}_q$. Then:
		\begin{enumerate}
			\item If ${\rm rank}(A) = {\rm rank}(A\vert B) = \#$ unknowns, then the system has a unique solution in $\mathbb{F}_q$;
			\item If ${\rm rank}(A) = {\rm rank}(A\vert B) < \#$ unknowns, the system has $q^{t - {\rm rank}(A)}$ solutions, where $t$ is the number of unknowns;
			\item If ${\rm rank}(A) \neq {\rm rank}(A\vert B)$, the system is inconsistent (has no solution) in $\mathbb{F}_q$.
		\end{enumerate}
	\end{theorem}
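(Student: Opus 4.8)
The plan is to treat this as the finite-field incarnation of the Rouch\'e--Capelli theorem (the fundamental theorem on linear systems), proving first the consistency criterion and then counting solutions via the coset structure of the solution set over $\ker(A)$.

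First I would settle consistency. Writing $A$ with columns $a_1,\dots,a_t$, the system $AX=B$ has a solution if and only if $B$ lies in the column span $\langle a_1,\dots,a_t\rangle$. Since the column span of $(A\mid B)$ is $\langle a_1,\dots,a_t,B\rangle$, this holds precisely when $\langle a_1,\dots,a_t\rangle=\langle a_1,\dots,a_t,B\rangle$, i.e. when $\mathrm{rank}(A)=\mathrm{rank}(A\mid B)$; here one uses that row rank equals column rank so that the notion of rank in the hypotheses is unambiguous. This immediately yields part (3): if $\mathrm{rank}(A)\neq\mathrm{rank}(A\mid B)$ then $B\notin\langle a_1,\dots,a_t\rangle$ (note adjoining one column raises the rank by at most one, so in fact $\mathrm{rank}(A\mid B)=\mathrm{rank}(A)+1$), hence no solution exists in $\mathbb{F}_q$.

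Next, assuming $\mathrm{rank}(A)=\mathrm{rank}(A\mid B)$, I would fix a particular solution $X_0$, which exists by the previous paragraph. For any vector $X$ one has $AX=B$ if and only if $A(X-X_0)=0$, i.e. $X-X_0\in\ker(A)$; thus the solution set is the coset $X_0+\ker(A)$, in bijection with $\ker(A)$ via $Z\mapsto X_0+Z$. Applying rank--nullity to the linear map $\mathbb{F}_q^{t}\to\mathbb{F}_q^{m}$, $X\mapsto AX$, gives $\dim_{\mathbb{F}_q}\ker(A)=t-\mathrm{rank}(A)$. A $d$-dimensional $\mathbb{F}_q$-vector space has exactly $q^{d}$ elements (expand in a basis), so the solution set has exactly $q^{t-\mathrm{rank}(A)}$ elements, which is part (2). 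For part (1), $\mathrm{rank}(A)=t$ forces $\dim\ker(A)=0$, so $\ker(A)=\{0\}$ and $X_0$ is the unique solution.

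There is no genuine obstacle: the statement is elementary and this argument is essentially a transcription of the proof over an arbitrary field with one extra counting step. The only point needing care is that very counting step, where the finiteness of $\mathbb{F}_q$ enters to turn the affine subspace $X_0+\ker(A)$ into a set of size $q^{\dim\ker(A)}$, together with the standard identity $\text{row rank}=\text{column rank}$ that makes the hypotheses on $\mathrm{rank}(A)$ and $\mathrm{rank}(A\mid B)$ well posed. Since the statement is quoted from \cite{hoffman1973algebra}, a one-line pointer to that reference suffices in lieu of writing the full argument.
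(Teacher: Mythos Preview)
Your proposal is correct and complete; it is the standard Rouch\'e--Capelli argument over a finite field, and your final observation is exactly right: the paper does not prove this theorem at all but simply quotes it from \cite{hoffman1973algebra} as a known result. There is therefore nothing to compare against, and your one-line pointer to the reference already matches what the paper does.
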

	
	Let $\mathcal{C}$ be a 2-D constacyclic code of area $M\times N$ over $\mathbb{F}_q$ in time-domain. Let $r$ be a received array of area $M\times N$. We can write in bi-variate polynomial form
	\begin{equation}\label{eq36}
		r(x,y)=c(x,y)+e(x,y).
	\end{equation}

	\subsection{Decoding in the time-domain: Method II}\label{Method:I}
	This subsection presents a decoding method to determine the error pattern $e(x,y)$, which subsequently leads to the recovery of the desired codeword $c(x,y)$.\par 
	Assume that there were $t$-errors due to the channel noise. From Theorem \ref{thm2DconstaSpect} we obtain the error locations set for the error pattern $e(x,y)$ as follows:
	\begin{equation}\label{ErrorSetE}
		\mathbb{E}=\{(i_k, j_k)\, \vert \, k=1, 2, \dots, t\}.
	\end{equation}
	Let the error pattern polynomial be
	\begin{equation}\label{error-poly}
		e(x,y)= \sum_{k=1}^{t}e_{i_k, j_k}x^{i_k}y^{j_k}.
	\end{equation}
	Note that
	\begin{equation}\label{Eq41}
		t\leq \Big\lfloor \frac{\vert V_c\vert}{2} \Big\rfloor 
	\end{equation} 
	by using Theorem \ref{DetectCorrect}. Let $\vert V_c\vert =s$. By Theorem \ref{thm2Dconsta} and Eq. \eqref{eq36}, we have the following non-homogeneous system of linear equations
	\begin{equation}\label{SystemEq}
		\begin{split}
			r(a_i, b_i) &= c(a_i, b_i) + e(a_i, b_i)\\
			&= e(a_i, b_i),
		\end{split}
	\end{equation}
	for all $(a_i, b_i) \in V_c$, where $i= 1, 2, \dots, s$. Equivalently, we have the following non-homogeneous system of linear equations
	\begin{equation}\label{systm}
		AX = B,
	\end{equation}
	where 
	\begin{equation*}
		A= \begin{pmatrix}
			a_1^{i_1}b_1^{j_1} & a_1^{i_2}b_1^{j_2} & \cdots & a_1^{i_t}b_1^{j_t}\\
			a_2^{i_1}b_2^{j_1} & a_2^{i_2}b_2^{j_2} & \cdots & a_2^{i_t}b_2^{j_t}\\
			\vdots & \vdots  & \ddots & \vdots\\
			a_{s}^{i_1}b_{s}^{j_1} & a_{s}^{i_2}b_{s}^{j_2} & \cdots & a_{s}^{i_t}b_{s}^{j_t}
		\end{pmatrix}, \quad 
		X= 
		\begin{pmatrix}
			e_{i_1, j_1}\\
			e_{i_2, j_2}\\
			\vdots\\
			e_{i_t, j_t}\\
		\end{pmatrix} \quad \text{ and } \quad
		B=
		\begin{pmatrix}
			r(a_1, b_1)\\
			r(a_2, b_2)\\
			\vdots \\
			r(a_s, b_s)
		\end{pmatrix}.
	\end{equation*}
	Now we have the following result.
	\begin{theorem}\label{NonHomoTD}
		The non-homogeneous system of linear equations \eqref{systm} has a unique solution.
	\end{theorem}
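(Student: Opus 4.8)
The plan is to appeal to Theorem~\ref{RankThm}: since the number of unknowns equals $t$, it suffices to show that $\mathrm{rank}(A) = \mathrm{rank}(A\mid B) = t$. Consistency of the system is immediate from the setup. Because $r = c + e$ with $c \in \mathcal{C}$, Theorem~\ref{thm2Dconsta} forces $c(a_i,b_i) = 0$ for every $(a_i,b_i) \in V_c$, whence $r(a_i,b_i) = e(a_i,b_i) = \sum_{k=1}^{t} e_{i_k,j_k}\, a_i^{i_k} b_i^{j_k}$. Thus $B = A X_0$ with $X_0 = (e_{i_1,j_1},\dots,e_{i_t,j_t})^{\mathsf{T}}$ the true error-value vector; in particular $B$ lies in the column space of $A$, so $\mathrm{rank}(A\mid B) = \mathrm{rank}(A)$.

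The heart of the argument is to prove $\mathrm{rank}(A) = t$, i.e.\ that the $t$ columns of $A$ are linearly independent over $\mathbb{F}_q$. Suppose $\lambda_1,\dots,\lambda_t \in \mathbb{F}_q$ satisfy $\sum_{k=1}^{t} \lambda_k\, a_i^{i_k} b_i^{j_k} = 0$ for every $(a_i,b_i) \in V_c$, and form the bi-variate polynomial $p(x,y) = \sum_{k=1}^{t} \lambda_k\, x^{i_k} y^{j_k} \in \mathcal{P}[\Omega]$; here the exponent pairs $(i_k,j_k)$ are pairwise distinct since $\mathbb{E}$ is a set of distinct error locations, so $p$ has at most $t$ nonzero coefficients, i.e.\ Hamming weight at most $t$. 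The hypothesis says precisely that $p$ vanishes on all of $V_c$, so by Theorem~\ref{thm2Dconsta} the array $p$ is a codeword of $\mathcal{C}$.

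Now I would invoke the error-correction bound: by Theorem~\ref{DetectCorrect} together with the standing assumption \eqref{Eq41}, the code corrects $t$ errors, so $d_{\text{min}} \geq 2t+1 > t$. A codeword of weight at most $t < d_{\text{min}}$ must be the zero array, hence $p \equiv 0$ and $\lambda_1 = \dots = \lambda_t = 0$. This establishes linear independence of the columns, so $\mathrm{rank}(A) = t$. Combining with $\mathrm{rank}(A\mid B) = \mathrm{rank}(A) = t = \#\{\text{unknowns}\}$, part~(1) of Theorem~\ref{RankThm} gives a unique solution $X$, which necessarily equals $X_0$; this recovers the error pattern $e(x,y)$ and hence the codeword $c = r - e$.

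The step I expect to be the main obstacle is the column-independence claim: the whole argument pivots on translating ``$\sum_k \lambda_k x^{i_k} y^{j_k}$ vanishes on $V_c$'' into ``it is a codeword'' via Theorem~\ref{thm2Dconsta}, and then eliminating it by a minimum-distance comparison. Two points need care: that the error locations are genuinely distinct so the polynomial really has weight $\le t$, and that the inequality $t < d_{\text{min}}$ is \emph{strict} --- this is exactly where the $t$-error-correcting assumption \eqref{Eq41} is used, rather than the weaker Singleton bound \eqref{Singlton} alone.
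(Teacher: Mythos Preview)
Your proof is correct and, in fact, cleaner than the paper's. Both arguments invoke Theorem~\ref{RankThm} and aim for $\mathrm{rank}(A)=\mathrm{rank}(A\mid B)=t$, but the routes differ. For consistency, the paper argues row-wise: it fixes $\rho=\mathrm{rank}(A)$, assumes the first $\rho$ rows of $A$ are independent, and checks that any linear relation among rows of $A$ persists in the augmented matrix $(A\mid B)$. You instead observe in one line that $B=AX_0$ for the true error vector, so $B$ lies in the column space of $A$. For $\mathrm{rank}(A)=t$, the paper proceeds by contradiction: two distinct solutions $X^{(1)},X^{(2)}$ would yield two error patterns, and it then writes ``$e^{(1)}=r-c=e^{(2)}$'' by appealing to the fixed transmitted codeword $c$ --- a step that is terse and arguably leans on the uniqueness being proved. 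Your argument makes the underlying mechanism explicit: a nontrivial linear relation among the columns of $A$ is precisely a nonzero array supported on $\mathbb{E}$ that vanishes on $V_c$, hence a codeword of weight at most $t<d_{\min}$, which is impossible. What your route buys is a transparent reduction to the minimum-distance hypothesis~\eqref{Eq41}; the paper's route avoids naming $d_{\min}$ but at the cost of some opacity. One small technical point: since $A$ has entries in the extension field $\mathbb{F}_{q^m}$, you should either take your scalars $\lambda_k\in\mathbb{F}_{q^m}$ (noting that the scalar extension $\mathcal{C}\otimes\mathbb{F}_{q^m}$ has the same minimum distance), or simply remark that $\mathbb{F}_q$-independence of the columns already suffices for uniqueness of an $\mathbb{F}_q$-valued solution, which is all that is needed since the error values lie in $\mathbb{F}_q$.
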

	\begin{proof}
		Let $r(x,y)$ be a received array polynomial.  If possible, let $B=0$ then $r(a_i, b_i)=0$ for all $(a_i, b_i)\in V_c, i=1, 2, \dots, s$. Then by Theorem \ref{thm2Dconsta}, $r(x,y)$ is a codeword, which is a contradiction. Hence the system is non-homogeneous.\\
		We know that  ${\rm rank}(A) \leq {\rm rank}(A\vert B)$. Now, we will prove ${\rm rank}(A) = {\rm rank}(A\vert B)$.\\
		Let ${\rm rank}(A)=\rho$. We know that ${\rm rank}(A)\leq {\rm min}(s, t)=t$, that is, $\rho\leq t$. Let $A_i$ and $\tilde{A_i}$ denote the $i^{\rm th}$ row of $A$ and of the augmented matrix $(A\vert B)$, respectively.\\
		By rearranging the equations of the system \eqref{SystemEq}, we can always make the first $\rho$ rows of $A$ linearly independent. So for the sake of simplicity, we assume that the first $\rho$ rows of the matrix $A$, namely, $\{A_1, A_2, \dots, A_{\rho}\}$, are linearly independent, and the last $s-\rho$ rows of $A$ are in ${\rm Span}(A_1, A_2, \dots, A_{\rho})$, that is, $k^{\rm th}$ row of $A$ can be expressed as
		\begin{equation}\label{Eq40}
			A_k=\sum_{i=1}^{\rho}z_iA_i,
		\end{equation}
		where $\rho +1 \leq k \leq s$ and $z_i\in \mathbb{F}_q$. Then by using Eqs. \eqref{SystemEq}, \eqref{systm} and \eqref{Eq40}, for the $k^{\rm th}$-row, we have
		\begin{equation}
			\begin{split}
				e(a_k, b_k) =& \sum_{i=1}^{\rho}z_ie(a_i, b_i),\\
				\implies	r(a_k, b_k)=&\sum_{i=1}^{\rho}z_ir(a_i, b_i),
			\end{split}
		\end{equation}
		where $\rho +1 \leq k \leq s$. Then 
		\begin{equation}\label{Eq42}
			\tilde{A_k}=\sum_{i=1}^{\rho}z_i \tilde{A_i},
		\end{equation}
		where $\rho +1 \leq k \leq s$. Then from Eq. \eqref{Eq42}, we have ${\rm rank}(A\vert B) \leq \rho$, that is, $\rho \leq {\rm rank}(A\vert B) \leq \rho$. Hence, ${\rm rank}(A) = {\rm rank}(A\vert B)$.\\
		Now we will prove that ${\rm rank}(A) = {\rm rank}(A\vert B)=t$.\\ 
		We have ${\rm rank}(A)\leq t$. If possible, assume that ${\rm rank}(A)< t$. By Theorem \ref{RankThm}(2), the system \eqref{systm} has $q^{t-{\rm rank}(A)}$ solutions in $\mathbb{F}_q$.\\ 
		Let $X^{(1)}$ and $X^{(2)}$ be two distinct solutions of the system \eqref{systm}. Then this will give two distinct error patterns $e^{(1)}(x, y)$ and $e^{(2)}(x, y)$. From Eq. \eqref{eq36}, we have $e^{(1)}(x, y)=r(x,y)-c(x,y)=e^{(2)}(x, y)$, which is a contradiction.\\
		Hence, ${\rm rank}(A) = {\rm rank}(A\vert B)=t$, the number of unknowns. This implies that the system \eqref{systm} has a unique solution. 
	\end{proof}
	Thus, the system of equations \eqref{systm} has a unique solution, leading to a unique error pattern $e(x,y)$. Thus
	\begin{equation*}
		c(x,y)= r(x,y)-e(x,y)
	\end{equation*}
	is the required decoded codeword.
	\subsection{Decoding using finite field Fourier transforms: Method III}\label{Method:II}
	This subsection presents another decoding method using FFFT. In this process, we determine the codeword spectrum $C(x,y)$, which subsequently leads to the recovery of the desired codeword $c(x,y)$ by taking the inverse Finite Field Fourier Transform (IFFFT). We can cleverly exploit \textit{sparsity} due to \textit{duality} while dealing with frequency-domain properties of the code. \par
	\textbf{Advantage of \textit{Method III}:} When $\vert \Omega \setminus \Pi \vert <<\vert \Pi\vert$, then \textit{Method:III} will work more efficiently over \textit{Method: II}, as we need to deal with a non-homogeneous system of linear equations with a lesser number of unknowns.\par 
	Let $\mathcal{C}$ be a 2-D constacyclic code of area $M\times N$ over $\mathbb{F}_q$. Let $r$ be a received array of area $M\times N$ over $\mathbb{F}_q$. Here $r=c+e$, where $c\in \mathcal{C}$, and $e$ is an error pattern outside $\mathcal{C}$, that is, it is a detectable error.  Then its image under FFFT is
	\begin{equation}
		\begin{split}
			{\rm FFFT}(r) & = {\rm FFFT}(c) + {\rm FFFT}(e) \\
			R & = C + E
		\end{split}
	\end{equation}
	Note that all the arrays $R$, $C$ and $E$ are of the same area $M\times N$, but over an extended field.\\
	We can write it in bi-variate polynomial form
	\begin{equation}\label{eq47}
		R(x,y)=C(x,y)+E(x,y).
	\end{equation}
	Using Lemma \ref{spectralNullLemma}, we have $C_{\theta, \phi}=0$ for each pair $(\theta, \phi)$ when $(\gamma \zeta_1^{\theta}, \beta \zeta_2^{\phi})\in V_c$. We collect all such pairs $(\theta, \phi)$ and construct the set $\widetilde{\mathbb{E}}$. The set $\widetilde{\mathbb{E}}$, where the spectrum $C$ is nulling, is completely determined by Lemma \ref{spectralNullLemma}. So we have
	\begin{equation}
		C_{\theta, \phi}=0 \quad \forall \quad (\theta, \phi)\in \widetilde{\mathbb{E}}.
	\end{equation}
	Observe that $\vert \widetilde{\mathbb{E}} \vert = \vert \Pi \vert  =\vert V_c\vert \implies \vert  \widetilde{\mathbb{E}}^c\vert = MN- \vert V_c\vert = s'$ (say).\\
	Let 
	\begin{equation}
		\widetilde{\mathbb{E}}^c=\{(\theta_k, \phi_k) \, \vert \, k=0, 1, \dots, s' \}.
	\end{equation} 
	Now the problem of finding the spectrum polynomial $C$ boils down to find the entries of $C$ only at $(\theta_k, \phi_k)\in \widetilde{\mathbb{E}}^c$.\\
	Let the spectrum polynomial over the extended field be
	\begin{equation}\label{eq51}
		C(x,y)= \sum_{k=1}^{s'}C_{\theta_k, \phi_k}x^{\theta_k}y^{\phi_k}.
	\end{equation}
	Let $\vert \mathbb{E}^c \vert = t'$. Note that 
	\begin{equation}
		e_{i, j}=0 \quad \forall \quad (i, j)\in \mathbb{E}^c=\Omega\setminus \mathbb{E},
	\end{equation} 
	where the set $\mathbb{E}$ of error locations of the error pattern $e(x,y)$ is given by Eq. \eqref{ErrorSetE}.  Let
	\begin{equation}
		\mathbb{E}^c = \{(i_l, j_l) \, \vert \, l=1, 2, \dots, t' \}.
	\end{equation}
	Consequently, for spectrum error polynomial $E(x,y)$, we have 
	\begin{equation}\label{Eq54}
		E(\zeta_1^{-i_l}, \zeta_2^{-j_l})=0 \quad \forall \quad l=1, 2, \dots, t',
	\end{equation}
	by Theorem \ref{th6}(2).
	Here, we have $\vert \mathbb{E}\vert = t$, $\vert \mathbb{E}^c \vert = t'$, $\vert V_c\vert =s$, $\vert \widetilde{\mathbb{E}}^c \vert =MN-\vert V_c\vert =s'$ and from Eq. \eqref{Eq41}
	\begin{equation}
		\begin{split}
			& t\leq \Big\lfloor \frac{\vert V_c\vert}{2} \Big\rfloor  \leq \vert V_c\vert \\
			\implies & MN-\vert V_c\vert  \leq MN-t \\
			\implies & \vert \widetilde{\mathbb{E}}^c\vert  \leq \vert \mathbb{E}^c\vert \\
			\implies & s'\leq t'.
		\end{split}
	\end{equation}
	By Eqs. \eqref{eq47}, \eqref{eq51} and \eqref{Eq54}, we have the following non-homogeneous system of linear equations
	\begin{equation}\label{SystemEqFD}
		\begin{split}
			R(\zeta_1^{-i_l}, \zeta_2^{-j_l}) &= C(\zeta_1^{-i_l}, \zeta_2^{-j_l}) + E(\zeta_1^{-i_l}, \zeta_2^{-j_l})\\
			&= C(\zeta_1^{-i_l}, \zeta_2^{-j_l}),
		\end{split}
	\end{equation}
	for all $(i_l, j_l) \in \mathbb{E}^c$, where $l= 1, 2, \dots, t'$. Equivalently, we have the following non-homogeneous system of linear equations (by an abuse of notation we are using the same symbol as we use in Eq. \eqref{systm})
	\begin{equation}\label{systmFD}
		AX = B,
	\end{equation}
	where 
			$$A= \begin{pmatrix}
				{\zeta_1}^{-i_1 \theta_1} {\zeta_2}^{-j_1 \phi_1}  & {\zeta_1}^{-i_1 \theta_2} {\zeta_2}^{-j_1 \phi_2} & \cdots & {\zeta_1}^{-i_1 \theta_{s'}} {\zeta_2}^{-j_1 \phi_{s'}}\\
				{\zeta_1}^{-i_2 \theta_1} {\zeta_2}^{-j_2 \phi_1}  & {\zeta_1}^{-i_2 \theta_2} {\zeta_2}^{-j_2 \phi_2} & \cdots & {\zeta_1}^{-i_2 \theta_{s'}} {\zeta_2}^{-j_2 \phi_{s'}}\\
				\vdots & \vdots  & \ddots & \vdots\\
				{\zeta_1}^{-i_{t'} \theta_1} {\zeta_2}^{-j_{t'} \phi_1}  & {\zeta_1}^{-i_{t'} \theta_2} {\zeta_2}^{-j_{t'} \phi_2} & \cdots & {\zeta_1}^{-i_{t'} \theta_{s'}} {\zeta_2}^{-j_{t'} \phi_{s'}}\\
			\end{pmatrix},$$ 
			$$X= 
			\begin{pmatrix}
				C_{\theta_1, \phi_1}\\
				C_{\theta_2, \phi_2}\\
				\vdots\\
				C_{\theta_{s'}, \phi_{s'}}\\
			\end{pmatrix} \quad \text{ and } \quad
			B=
			\begin{pmatrix}
				R(\zeta_1^{-i_1}, \zeta_2^{-j_1})\\
				R(\zeta_1^{-i_2}, \zeta_2^{-j_2})\\
				\vdots \\
				R(\zeta_1^{-i_{t'}}, \zeta_2^{-j_{t'}})
			\end{pmatrix}.$$
	Now, we have the following result.
	\begin{theorem}
		The non-homogeneous system of linear equations \eqref{systmFD} has a unique solution.
	\end{theorem}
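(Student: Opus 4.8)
The plan is to mirror the proof of Theorem~\ref{NonHomoTD}, transposing every step from the time domain to the transform domain: the characterisation ``an $\mathbb{F}_q$-array is a codeword iff it vanishes on $V_c$'' (Theorem~\ref{thm2Dconsta}) is replaced by its dual, Theorem~\ref{th6}(2) together with Lemma~\ref{spectralNullLemma}, and we use that the FFFT is injective (Remark after Eq.~\eqref{2DFFFT}). Concretely, I will show $\mathrm{rank}(A)=\mathrm{rank}(A\mid B)=s'$, the number of unknowns, so that Theorem~\ref{RankThm}(1) yields a unique solution.

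\emph{Consistency.} The true codeword spectrum $C={\rm FFFT}(c)$, restricted to the coordinates in $\widetilde{\mathbb{E}}^c$, already solves \eqref{systmFD}: $C_{\theta,\phi}=0$ on $\widetilde{\mathbb{E}}$ by Lemma~\ref{spectralNullLemma}, while for $(i_l,j_l)\in\mathbb{E}^c$ the sample $e_{i_l,j_l}$ vanishes, so $E(\zeta_1^{-i_l},\zeta_2^{-j_l})=0$ by Theorem~\ref{th6}(2) and hence $R(\zeta_1^{-i_l},\zeta_2^{-j_l})=C(\zeta_1^{-i_l},\zeta_2^{-j_l})$, which is the $l$-th equation of \eqref{systmFD}. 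Since a solution exists, $\mathrm{rank}(A)=\mathrm{rank}(A\mid B)$. (Equivalently, one may copy the row-dependence bookkeeping of Theorem~\ref{NonHomoTD}: a relation among the rows of $A$ is a relation among the monomial vectors $(\zeta_1^{-i_l\theta_k}\zeta_2^{-j_l\phi_k})_k$, hence is inherited by the augmented column, which merely lists the values of the single polynomial $C$ at the points $(\zeta_1^{-i_l},\zeta_2^{-j_l})$.)

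\emph{Uniqueness.} Let $C'$ be a second solution and set $D=C-C'=\sum_{(\theta,\phi)\in\widetilde{\mathbb{E}}^c}D_{\theta,\phi}x^{\theta}y^{\phi}$; then $D_{\theta,\phi}=0$ on $\widetilde{\mathbb{E}}$ and $D(\zeta_1^{-i_l},\zeta_2^{-j_l})=0$ for all $(i_l,j_l)\in\mathbb{E}^c$. Let $d$ be the $\mathbb{F}_{q^t}$-array with ${\rm FFFT}(d)=D$ (over $\mathbb{F}_{q^t}$ the transform is a bijection, with inverse \eqref{vectorTime}). By Theorem~\ref{th6}(2), $D(\zeta_1^{-i_l},\zeta_2^{-j_l})=0$ forces $d_{i_l,j_l}=0$ for every $(i_l,j_l)\in\mathbb{E}^c$, so $\operatorname{supp}(d)\subseteq\mathbb{E}$ and $w_H(d)\le|\mathbb{E}|=t$. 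On the other hand $D_{\theta,\phi}=0$ on $\widetilde{\mathbb{E}}$ gives, via \eqref{spectralnullEq}, $d(\gamma\zeta_1^{\theta},\beta\zeta_2^{\phi})=0$ at every point of $V_c$, so $d$ satisfies all the check equations of $\mathcal{C}$, i.e.\ $d$ belongs to the code obtained from $\mathcal{C}$ by extending scalars to $\mathbb{F}_{q^t}$. Since $t\le\lfloor|V_c|/2\rfloor$ is within the correctable radius, $w_H(d)<d_{\text{min}}$, hence $d=0$, hence $D=0$ by injectivity of the FFFT, hence $C'=C$. As a solution vector $X$ is just the list of entries of a spectrum supported on $\widetilde{\mathbb{E}}^c$, \eqref{systmFD} has a unique solution, and Theorem~\ref{RankThm}(1) forces $\mathrm{rank}(A)=s'$.

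The one step that is not pure bookkeeping is the inequality $w_H(d)<d_{\text{min}}$ for nonzero $d$: we need that a nonzero array over the splitting field $\mathbb{F}_{q^t}$ vanishing on all of $V_c$ still has weight at least $d_{\text{min}}$, i.e.\ the minimum distance is not diminished under the field extension. This is the standard observation that the rank of a fixed $\mathbb{F}_q$-valued parity-check matrix---equivalently, the smallest number of linearly dependent columns---is unchanged over an extension field; it must be invoked explicitly here because an arbitrary solution $C'$ of \eqref{systmFD} need not satisfy the conjugate-symmetry constraint (Theorem~\ref{th3}) and hence need not be the spectrum of an $\mathbb{F}_q$-array. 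The remaining ingredients---the count $s'\le t'$, the identity $D(\zeta_1^{-i},\zeta_2^{-j})=MN\gamma^{i}\beta^{j}d_{i,j}$, and the correspondence between solution vectors and spectra supported on $\widetilde{\mathbb{E}}^c$---are routine.
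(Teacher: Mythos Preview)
Your argument is correct and is precisely the transposition of Theorem~\ref{NonHomoTD} that the paper's one-line proof gestures at: consistency via the true spectrum $C$, then uniqueness by showing the difference of two solutions is the transform of a low-weight codeword. The one genuinely new ingredient you supply---and correctly flag---is that a second solution of \eqref{systmFD} need not satisfy the conjugate-symmetry constraint, so the minimum-distance step must be carried out over $\mathbb{F}_{q^t}$ and requires the (standard) invariance of $d_{\min}$ under scalar extension; the paper's proof suppresses this point entirely, so your version is in fact more complete.
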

	\begin{proof}
		The result directly follows from the arguments presented in the proof of Theorem \ref{NonHomoTD}.
	\end{proof}
	Thus, the system of equations \eqref{systmFD} has a unique solution, leading to a unique codeword spectrum $C(x,y)$. Thus
	\begin{equation*}
		c= {\rm IFFFT}(C)
	\end{equation*}
	is the required decoded codeword in time-domain.
	\begin{remark}
		One can opt any of the following methods to solve the system of linear equations in Method \ref{Method:I} and Method \ref{Method:II}:
		\begin{enumerate}
			\item Gaussian elimination over $\mathbb{F}_q$;
			\item Matrix inversion;
			\item LU decomposition over $\mathbb{F}_q$;
			\item Using the null space (homogeneous + particular solution), etc.
		\end{enumerate}
	\end{remark}

	\subsection{Numerical examples for the above Methods} In this subsection, we will provide a few examples that demonstrate the aforesaid decoding algorithms.
	\begin{example}
		Let $\mathbb{F}_{81}=\mathbb{F}_3(\alpha)$, where $\alpha$ is a root of a primitive polynomial $x^4+x+2$. Suppose $\zeta_1=\alpha^{20}, \zeta_2=\alpha^{16}$ are $4^{\rm th}$ and $5^{\rm th}$ roots of unity, respectively, and $\gamma =\alpha^{10}$ and $\beta =\alpha^{8}$ are $4^{\rm th}$ and $5^{\rm th}$ roots of $\lambda_1=2$ and $\lambda_2=2$, respectively. Let $\mathcal{C}$ be a 2-D $(2, 2)$-constacyclic code of area $4\times 5$ over $\mathbb{F}_{3}$ having CZ set 
		$$V_{c}=\{(\gamma, \beta), (\gamma^3, \beta^3), (\gamma, 2\beta^4), (\gamma^3, 2\beta^2); (\gamma, \beta^3), (\gamma^3, 2\beta^4), (\gamma, 2\beta^2), (\gamma^3, \beta)\}.$$
		Let the received array be 
		\begin{equation*}
			r =
			\begin{pmatrix}
				0 & 2  & 0 & 0 & 0\\
				0 & 0  & 0 & 0 & 0\\
				0 & 0  & 0 & 0 & 0\\
				0 & 0  & 0 & 0 & 0
			\end{pmatrix}.
		\end{equation*}
		
		From Theorem \ref{thm2DconstaSpect}, we have $r_{0}(\beta)=2\beta \neq 0$, $r_{1}(\beta)=0$, ${r_2}(\beta)=0$ and ${r_3}(\beta)=0 \implies $ there is an error in the $0^{\rm th}$-row.\\
		Also, we have ${r_0}(\gamma)= 0$, ${r_1}(\gamma)=2 \neq 0$, ${r_2}(\gamma)=0$ and ${r_3}(\gamma)=0 $, ${r_4}(\gamma)=0 \implies $ there is an error in the $1^{\rm st}$-column. Hence, the set of error locations is $\mathbb{E}=\{(0, 1)\}$. So we can assume that the required error pattern is $e(x,y)=e_{0, 1}y$. We have $r(x,y)=2y$. From Eq. \eqref{systm}, we have 
		\begin{equation*}
			\begin{pmatrix}
				\beta \\
				\beta^3\\
				2\beta^4\\
				2\beta^2\\
				\beta^3\\
				2\beta^4\\
				2\beta^2\\
				\beta
			\end{pmatrix}
			\begin{pmatrix}
				e_{0,1}
			\end{pmatrix}=
			\begin{pmatrix}
				2\beta \\
				2\beta^3\\
				\beta^4\\
				\beta^2\\
				2\beta^3\\
				\beta^4\\
				\beta^2\\
				2\beta
			\end{pmatrix}.
		\end{equation*}
		Note that ${\rm rank}(A)={\rm rank}(A\vert B)=1$, number of unknowns. 
		Therefore, we have a unique solution. On solving the above system, we get $e_{0,1}=2$, that is, $e(x,y)=2y$. Thus, $c(x,y)=r(x,y)-e(x,y)=0$. Hence the transmitted codeword was 
		\begin{equation*}
			c =
			\begin{pmatrix}
				0 & 0  & 0 & 0 & 0\\
				0 & 0  & 0 & 0 & 0\\
				0 & 0  & 0 & 0 & 0\\
				0 & 0  & 0 & 0 & 0
			\end{pmatrix}.
		\end{equation*}
	\end{example}
	
	\begin{example}
		Let $\mathbb{F}_{81}=\mathbb{F}_3(\alpha)$, where $\alpha$ is a root of a primitive polynomial $x^4+x+2$. Suppose $\zeta_1=\alpha^{20}, \zeta_2=\alpha^{16}$ are $4^{\rm th}$ and $5^{\rm th}$ roots of unity, respectively, and $\gamma =\alpha^{10}$ and $\beta =\alpha^{8}$ are $4^{\rm th}$ and $5^{\rm th}$ roots of $\lambda_1=2$ and $\lambda_2=2$, respectively. Let $\mathcal{C}$ be a 2-D $(2, 2)$-constacyclic code of area $4\times 5$ over $\mathbb{F}_{3}$ having CZ set 
		$$V_{c}=\{(\gamma, \beta), (\gamma^3, \beta^3), (\gamma, 2\beta^4), (\gamma^3, 2\beta^2); (\gamma, \beta^3), (\gamma^3, 2\beta^4), (\gamma, 2\beta^2), (\gamma^3, \beta)\}.$$
		Let the received array be 
		\begin{equation*}
			r =
			\begin{pmatrix}
				0 & 2  & 0 & 0 & 0\\
				0 & 1  & 0 & 0 & 0\\
				0 & 0  & 0 & 0 & 0\\
				0 & 0  & 0 & 0 & 0
			\end{pmatrix}.
		\end{equation*}		
		From Theorem \ref{thm2DconstaSpect}, we have ${r_0}(\beta)=2\beta \neq 0$, ${r_1}(\beta)=\beta \neq 0$, ${r_2}(\beta)=0$ and ${r_3}(\beta)=0 \implies $ there is an error in the $0^{\rm th}$-row and $1^{\rm st}$-row.\\
		Also, we have ${r_0}(\gamma)= 0$, ${r_1}(\gamma)=2 + \gamma \neq 0$, ${r_2}(\gamma)=0$ and ${r_3}(\gamma)=0 $, ${r_4}(\gamma)=0 \implies $ there is an error in the $1^{\rm st}$-column. Hence, the set of error locations is $\mathbb{E}=\{(0, 1), (1, 1)\}$. So we assume that the required error pattern is $e(x,y)=e_{0, 1}y + e_{1, 1}xy$. We have $r(x,y)=2y + xy$. From Eq. \eqref{systm}, we have 
		\begin{equation*}
			\begin{pmatrix}
				\beta & \gamma \beta \\
				\beta^3 & \gamma^3 \beta^3\\
				2\beta^4 & 2\gamma \beta^4\\
				2\beta^2 & 2\gamma^3 \beta^2\\
				\beta^3 & \gamma \beta^3\\
				2\beta^4 & 2\gamma^3 \beta^4\\
				2\beta^2 & 2\gamma \beta^2\\
				\beta & \gamma^3 \beta
			\end{pmatrix}
			\begin{pmatrix}
				e_{0,1}\\
				e_{1, 1}
			\end{pmatrix}=
			\begin{pmatrix}
				2\beta + \gamma \beta \\
				2\beta^3 + \gamma^3 \beta^3\\
				\beta^4 + 2\gamma \beta^4\\
				\beta^2 + 2\gamma^3 \beta^2\\
				2\beta^3 + \gamma \beta^3\\
				\beta^4 + 2\gamma^3 \beta^4\\
				\beta^2 + 2\gamma \beta^2\\
				2\beta + \gamma^3 \beta
			\end{pmatrix}.
		\end{equation*}
		Note that ${\rm rank}(A)={\rm rank}(A\vert B)=2$, number of unknowns. Therefore, we have a unique solution.  On solving the above system, we get $e_{0,1}=2$, $e_{1, 1} =1$ that is, $e(x,y)=2y + xy$. Thus, $c(x,y)=r(x,y)-e(x,y)=0$. Hence the transmitted codeword was 
		\begin{equation*}
			c =
			\begin{pmatrix}
				0 & 0  & 0 & 0 & 0\\
				0 & 0  & 0 & 0 & 0\\
				0 & 0  & 0 & 0 & 0\\
				0 & 0  & 0 & 0 & 0
			\end{pmatrix}.
		\end{equation*}
	\end{example}

	\section{Conclusions}\label{section6}
	
	We derived the spectral domain properties of 2-D constacyclic coded arrays over $\mathbb{F}_q$ by viewing them through the lens of common zero sets, generalizing the results for binary cyclic codes originally proposed by Imai. We also proposed efficient decoding methods for correcting random errors within the design distance of 2-D constacyclic codes using syndrome computations in both the time and frequency domains. Our proposed ideas are practically useful for next-generation 2-D bar codes and in data storage devices that work with native 2-D codes without requiring rastering of 1-D codes and other overheads.

	\section*{Acknowledgement}
	Vidya Sagar and Shikha Patel are supported by the Institute of Eminence (IoE) Postdoctoral Fellowship at IISc for carrying out this research.
	\bibliography{funbibfile}

\begin{thebibliography}{10}
\providecommand{\url}[1]{#1}
\csname url@samestyle\endcsname
\providecommand{\newblock}{\relax}
\providecommand{\bibinfo}[2]{#2}
\providecommand{\BIBentrySTDinterwordspacing}{\spaceskip=0pt\relax}
\providecommand{\BIBentryALTinterwordstretchfactor}{4}
\providecommand{\BIBentryALTinterwordspacing}{\spaceskip=\fontdimen2\font plus
\BIBentryALTinterwordstretchfactor\fontdimen3\font minus
  \fontdimen4\font\relax}
\providecommand{\BIBforeignlanguage}[2]{{%
\expandafter\ifx\csname l@#1\endcsname\relax
\typeout{** WARNING: IEEEtran.bst: No hyphenation pattern has been}%
\typeout{** loaded for the language `#1'. Using the pattern for}%
\typeout{** the default language instead.}%
\else
\language=\csname l@#1\endcsname
\fi
#2}}
\providecommand{\BIBdecl}{\relax}
\BIBdecl

\bibitem{prange1957cyclic}
E.~Prange, \emph{Cyclic error-correcting codes in two symbols}.\hskip 1em plus
  0.5em minus 0.4em\relax Air force Cambridge research center, 1957.

\bibitem{berlekamp1968algebraic}
E.~R. Berlekamp, ``Algebraic coding theory,'' \emph{McGraw-Hill, New York},
  vol.~8, 1968.

\bibitem{imaiTDC}
H.~Imai, ``A theory of two-dimensional cyclic codes,'' \emph{Information and
  Control}, vol.~34, no.~1, pp. 1--21, 1977.

\bibitem{li20142}
X.~Li and H.~Li, ``2-{D} skew-cyclic codes over $\mathbb{F}_q[x, y; \rho,
  \theta]$,'' \emph{Finite Fields and Their Applications}, vol.~25, pp. 49--63,
  2014.

\bibitem{sepasdar2016}
Z.~Sepasdar and K.~Khashyarmanesh, ``Characterizations of some two-dimensional
  cyclic codes correspond to the ideals of $\mathbb{F}[x, y]/\langle x^s -1,
  y^{2k} -1\rangle$,'' \emph{Finite Fields and Their Applications}, vol.~41,
  pp. 97--112, 2016.

\bibitem{rajabi2018repeated}
Z.~Rajabi and K.~Khashyarmanesh, ``Repeated-root two-dimensional constacyclic
  codes of length $2p^s.2^k$,'' \emph{Finite Fields and Their Applications},
  vol.~50, pp. 122--137, 2018.

\bibitem{patel2020repeated}
S.~Patel and O.~Prakash, ``Repeated-root bidimensional ($\mu$,
  $\nu$)-constacyclic codes of length $4 p^t. 2^r$,'' \emph{International
  Journal of Information and Coding Theory}, vol.~5, no. 3-4, pp. 266--289,
  2020.

\bibitem{sharma2019class}
A.~Sharma and M.~Bhaintwal, ``A class of 2{D} skew-cyclic codes over
  $\mathbb{F}_q+u\mathbb{F}_q$,'' \emph{Applicable Algebra in Engineering,
  Communication and Computing}, vol.~30, pp. 471--490, 2019.

\bibitem{Patel_2Dcode}
O.~Prakash and S.~Patel, ``A note on two-dimensional cyclic and constacyclic
  codes,'' \emph{Journal of Algebra Combinatorics Discrete Structures and
  Applications}, pp. 161--174, 2022.

\bibitem{CCDS}
V.~Sagar, S.~Patel, and S.~S. Garani, ``Two-dimensional constacyclic codes over
  $\mathbb{F}_q$,'' \emph{arXiv preprint arXiv:2412.09915v2}, 2025.

\bibitem{blahut1979algebraic}
R.~E. Blahut, ``Algebraic codes in the frequency domain,'' in \emph{Algebraic
  Coding Theory and Applications}.\hskip 1em plus 0.5em minus 0.4em\relax
  Springer, 1979, pp. 447--494.

\bibitem{blahut1983theory}
------, \emph{Theory and practice of error control codes}.\hskip 1em plus 0.5em
  minus 0.4em\relax Addison-Wesley Publishing Company, 1983.

\bibitem{blahut1979transform}
------, ``Transform techniques for error control codes,'' \emph{IBM Journal of
  Research and Development}, vol.~23, no.~3, pp. 299--315, 1979.

\bibitem{rajan1992transform}
B.~S. Rajan and M.~Siddiqi, ``Transform domain characterization of abelian
  codes,'' \emph{IEEE Transactions on Information Theory}, vol.~38, no.~6, pp.
  1817--1821, 1992.

\bibitem{rajan1994generalized}
------, ``A generalized \textnormal{DFT} for {A}belian codes over
  $\mathbb{Z}_m$,'' \emph{IEEE Transactions on Information Theory}, vol.~40,
  no.~6, pp. 2082--2090, 1994.

\bibitem{rajan1994transform}
------, ``Transform domain characterization of cyclic codes over
  $\mathbb{Z}_m$,'' \emph{Applicable Algebra in Engineering, Communication and
  Computing}, vol.~5, pp. 261--275, 1994.

\bibitem{massey1998discrete}
J.~L. Massey, ``The discrete {F}ourier transform in coding and cryptography,''
  \emph{IEEE Inform}, 1998.

\bibitem{dey2003dft}
B.~K. Dey and B.~S. Rajan, ``\textnormal{DFT} domain characterization of
  quasi-cyclic codes,'' \emph{Applicable Algebra in Engineering, Communication
  and Computing}, vol.~13, no.~6, pp. 453--474, 2003.

\bibitem{mondal2021efficient}
A.~Mondal and S.~S. Garani, ``Efficient hardware architectures for 2-{D} {BCH}
  codes in the frequency domain for two-dimensional data storage
  applications,'' \emph{IEEE Transactions on Magnetics}, vol.~57, no.~5, pp.
  1--14, 2021.

\bibitem{khaled}
K.~A. Abdel-Ghaffar, R.~J. McEliece, and H.~C. Van~Tilborg, ``Two-dimensional
  burst identification codes and their use in burst correction,'' \emph{IEEE
  Transactions on Information Theory}, vol.~34, no.~3, pp. 494--504, 1988.

\bibitem{patel2024quantum}
S.~Patel and S.~S. Garani, ``Quantum constacyclic {BCH} codes over qudits: A
  spectral-domain approach,'' \emph{arXiv preprint arXiv:2407.16814}, 2024.

\bibitem{Blahut_book}
R.~E. Blahut, \emph{Algebraic codes on lines, planes, and curves: an
  engineering approach}.\hskip 1em plus 0.5em minus 0.4em\relax Cambridge
  University Press, 2008.

\bibitem{huffman}
W.~C. Huffman and V.~Pless, \emph{Fundamentals of error-correcting
  codes}.\hskip 1em plus 0.5em minus 0.4em\relax Cambridge University press,
  2010.

\bibitem{hoffman1973algebra}
K.~Hoffman, R.~Kunze, H.~E. Finsterbusch \emph{et~al.}, \emph{{\'A}lgebra
  lineal}.\hskip 1em plus 0.5em minus 0.4em\relax Prentice-Hall
  Hispanoamericana, 1973.

\end{thebibliography}
\end{document}